\newtheorem{theorem}{Theorem}
\newtheorem{lemma}[theorem]{Lemma}
\begin{document}

\title{
\texorpdfstring{$\mathcal{PT}$-like phase transitions from square roots of supersymmetric Hamiltonians}{PT-like phase transitions from square roots of supersymmetric Hamiltonians}
}

\author{Jacob L. Barnett }
\affiliation{Basque Center for Applied Mathematics, Bilbao, Bizkaia, 48009 Spain}
\email{jbarnett@bcamath.org}

\author{Ramy El-Ganainy}
\affiliation{Department of Electrical and Computer Engineering, Saint Louis University,  Saint Louis, MO 63103, USA}
\email{relganainy@slu.edu}

\begin{abstract}
We introduce a general framework for realizing $\mathcal{PT}$-like phase transitions in non-Hermitian systems without imposing explicit parity--time ($\mathcal{PT}$) symmetry. The approach is based on constructing a Hamiltonian as the square root of a supersymmetric partner energy-shifted by a constant. This formulation naturally leads to bipartite dynamics with balanced gain and loss and can incorporate non-reciprocal couplings. The resulting systems exhibit entirely real spectra over a finite parameter range precisely when the corresponding passive Hamiltonian lacks a zero mode. As the non-Hermitian parameter representing gain and loss increases, the spectrum undergoes controlled real-to-complex transitions at second-order exceptional points. We demonstrate the versatility of this framework through several examples, including well-known models such as the Hatano--Nelson (HN) and complex Su--Schrieffer--Heeger (cSSH) lattices. Extending the formalism to $q$-commuting matrices further enables the systematic realization of higher-order exceptional points in systems with unidirectional couplings. Overall, this work uncovers new links between non-Hermitian physics and supersymmetry, offering a practical route to engineer photonic arrays with tunable spectral properties beyond what is achievable with explicit $\mathcal{PT}$-symmetry.
\end{abstract}

    
\maketitle

\section{Introduction}


Non-Hermitian physics has attracted significant attention in recent years as a powerful framework for modeling open systems—systems that exchange energy or information with their environment. Among the most intriguing subclasses of non-Hermitian systems are those exhibiting \emph{parity-time} ($\mathcal{PT}$) symmetry~\cite{Bender1998,Bender1999}. A system is said to be $\mathcal{PT}$-symmetric if its governing Hamiltonian is invariant under the combined operations of \emph{parity inversion} ($\mathcal{P}$) and \emph{time reversal} ($\mathcal{T}$), i.e., $H \mathcal{PT} = \mathcal{PT} H$, where $\mathcal{P}$ and $\mathcal{T}$ are isometric and involutory operators representing parity and time-reversal transformations, respectively. The operator $\mathcal{P}$ is linear, while $\mathcal{T}$ is antilinear and  acts as complex-conjugation in a suitable basis \cite{Garcia2005}. Remarkably, a $\mathcal{PT}$-symmetric Hamiltonian can exhibit an entirely real eigenvalue spectrum despite being non-Hermitian. This occurs in the so-called $\mathcal{PT}$-unbroken regime, whose defining characteristic is that the Hamiltonian's eigenspaces are invariant subspaces of the combined $\mathcal{PT}$ operator. As a system's parameters vary,  $\mathcal{PT}$-symmetry may spontaneously break, resulting in the existence of non-real eigenvalues appearing in complex-conjugate pairs whose corresponding eigenspaces are no longer $\mathcal{PT}$-invariant subspaces. The transition point between the $\mathcal{PT}$-unbroken and $\mathcal{PT}$-broken phases occurs at exceptional points (EPs), where multiple eigenvalues and eigenspaces of the Hamiltonian coalesce \cite{Kato1995,Heiss2000,Rotter2003}.

While initial interest in $\mathcal{PT}$-symmetry was primarily theoretical, it has since found widespread applicability across various physical platforms, most notably in optics and photonics \cite{ElGanainy2007,Makris2008,Rter2010}. In particular, optics and photonics have emerged as a fertile ground for the experimental realization of $\mathcal{PT}$-symmetric systems, offering unprecedented control over light beyond the conventional limits of refractive index engineering \cite{Feng2017,ElGanainy2018,zdemir2019,Miri2019}. A key reason for this success lies in the constraints imposed by  $\mathcal{PT}$-symmetry on the complex refractive index: its real part must be an even function of position, while its imaginary part—representing gain and loss—must be odd. In optics, these conditions can be readily satisfied using coupled waveguides or resonators with spatially balanced regions of gain and loss. Moreover, by employing coupled-mode theory, the governing Maxwell equations can be reduced to effective finite-dimensional non-Hermitian Hamiltonians, rendering both design and analysis tractable \cite{Fan2003,Haus1991}.\\

However, $\mathcal{PT}$-symmetry is not the only route to generating non-Hermitian Hamiltonians with real spectra. For instance, any Hamiltonian $H$ that commutes with an arbitrary anti-unitary operator $\Theta$, which can be interpreted as a generalized time-reversal symmetry \cite{Wigner1932Zeitumkehr,Wigner1959,Wigner1993}, shares the same spectral constraints: real eigenvalues correspond to $\Theta$-invariant eigenspaces, while complex eigenvalues occur in conjugate pairs, $(\lambda, \lambda^*)$, with $\Theta$ mapping eigenstates corresponding to $\lambda$ to  eigenstates corresponding to $\lambda^*$. A Hamiltonian respects a generalized time-reversal symmetry if and only if it is \textit{pseudo-Hermitian} \cite{Radjavi1969}, where an operator is called pseudo-Hermitian if there exists a Hermitian operator, $\eta$, such that \cite{Heisenberg1957}
\begin{align}
    H= \eta^{-1} H^{\dagger} \eta. \label{eqn:pseudo-defn}
\end{align} 
In this case, $\eta$, referred to as a metric operator \cite{Scholtz1992}, defines a conserved quantity via its induced quadratic form \cite{Pauli1943DiracFieldQuantization}. The interplay between pseudo-Hermiticity and $\mathcal{PT}$-symmetry has been discussed more recently in \cite{Mostafazadeh2002,Mostafazadeh2003}. 
A necessary and sufficient condition for a complex matrix to be diagonalizable with real eigenvalues is that a positive-definite choice for $\eta$ exists \cite{Taussky1960,Drazin1962}, in which case $H$ is called \textit{quasi-Hermitian} \cite{dieudonne}. 
As the parameters of a Hamiltonian vary, its associated metric operators can change. Consequently, a Hamiltonian may be quasi-Hermitian only within a restricted region of parameter space. At the boundary of this region, some or all eigenvalues may bifurcate into complex-conjugate pairs.

Although pseudo-Hermiticity is a broader symmetry constraint than \(\mathcal{PT}\)-symmetry, this increased generality has yet to spur widespread experimental implementation. The primary challenge stems from the complexity of realizing generic pseudo-Hermitian Hamiltonians, which typically demand intricate designs featuring asymmetric or long-range couplings as well as precise fine-tuning of numerous parameters. Moreover, additional tuning is often necessary to observe phase transitions. In contrast, \(\mathcal{PT}\)-symmetric systems are more straightforward to engineer, benefiting from a clear geometric interpretation of the combined parity-time operation.


In this paper, we introduce an explicit class of pseudo-Hermitian Hamiltonians that exhibit phase transitions from real to complex spectra while remaining experimentally accessible using standard discrete photonic platforms, without the need for asymmetric or long-range couplings. The key idea of our initial construction is to take a nontrivial “square root” of a supersymmetric Hamiltonian energy-shifted by a constant value. The resultant Hamiltonian consists of a sum of anti-commuting Hermitian and anti-Hermitian parts and exhibits a second-order exceptional point across which the spectrum undergoes a transition from real to complex eigenvalues. More generally, we show that sums of $q$-commuting matrices \cite{Koornwinder1997} exhibit $n$-th order exceptional points. 
Beyond its practical implications, our approach establishes a novel connection between non-Hermitian physics, supersymmetric quantum mechanics \cite{Witten1981,Witten1982}, and supersymmetric optics \cite{Miri2013, Heinrich2014,Datta2024}. This connection extends previous research on generating nontrivial Hermitian topologies via square roots \cite{Arkinstall2017,Zhang2019,Kremer2020}. 
\\

\begin{figure}[t]
    \centering
    \includegraphics[width=\columnwidth]{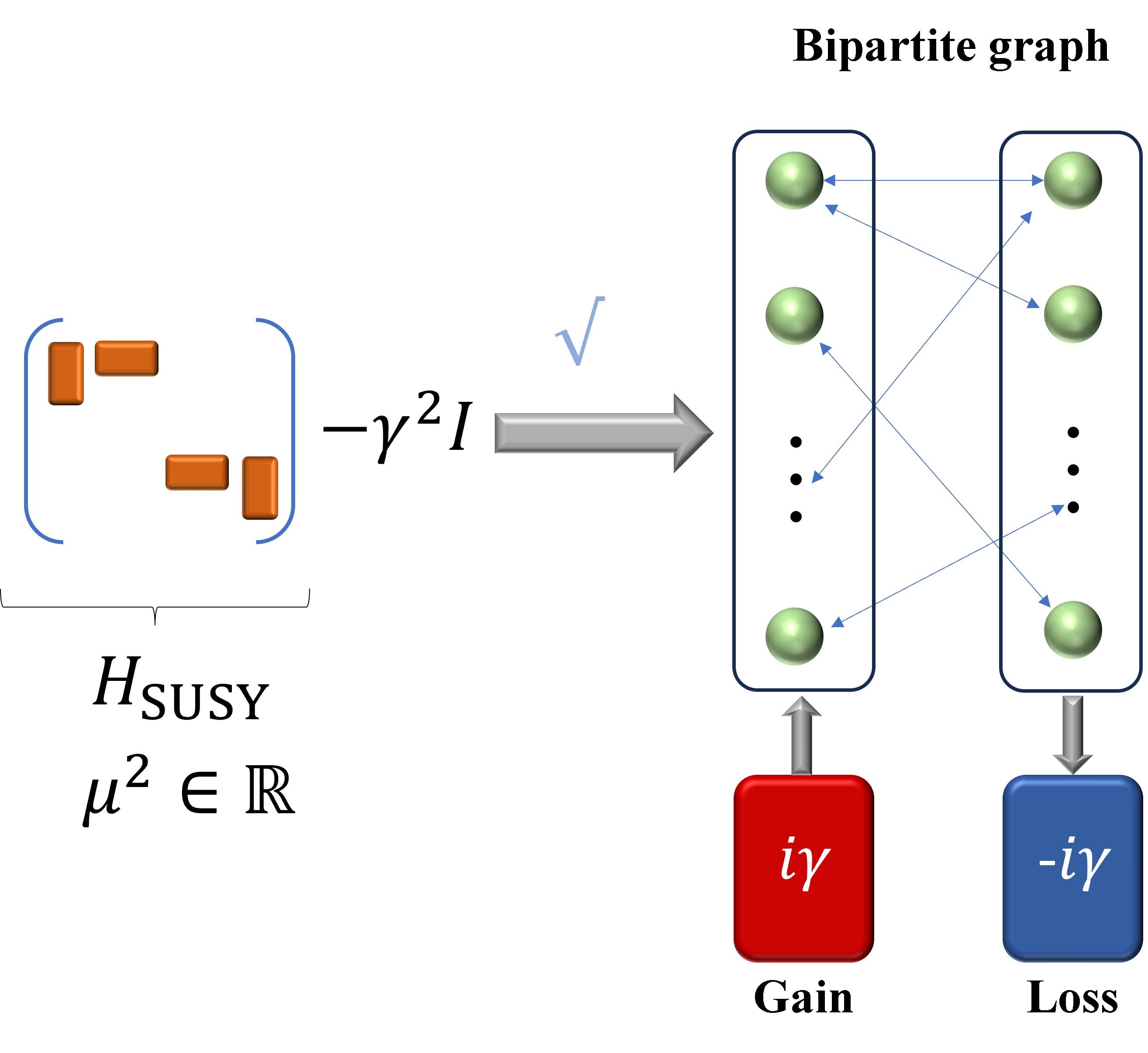}
    \caption{
    Schematic illustration of our construction of non-Hermitian bipartite tight-binding models. Starting with a supersymmetric (SUSY) Hamiltonian, $H_{\mathrm{SUSY}}$, with real spectrum, we take a matrix square root after a uniform downward shift by $\gamma^2$, where $\gamma \in \mathbb{R}$. The result is a tight-binding model with hopping on a bipartite graph and gain and loss applied to vertices in the color classes $+1$ and $-1$, respectively. 
    Our procedure does not require geometric symmetry or explicit $\mathcal{PT}$-symmetry, thereby introducing a more general condition under which real-to-complex spectral transitions occur in non-Hermitian systems.
    }
    \label{fig:Schematic}
\end{figure}


\section{Results}
In what follows, we present the main results of our work, beginning with a description of the general framework for constructing non-Hermitian Hamiltonians with $\mathcal{PT}$-like phase transitions from shifted supersymmetric Hamiltonians. We then analyze the symmetry properties of these systems and illustrate our findings through several representative examples. Finally, we discuss possible generalizations achieved via $q$-deformation.\\


\subsection{Preliminaries on Bipartite Graphs}

To motivate the discussion, we start by considering the case of $2 \times 2$ pseudo-Hermitian matrices, which can be studied using a set of concrete, mathematically equivalent, and easily verifiable conditions \cite{Mostafazadeh2006PseudoHermitian,Wang2013}. One particularly elegant result is stated below and proven in \cref{app:2x2}.
\begin{theorem} \label{thm:2x2}
    A traceless $2 \times 2$ complex matrix is pseudo-Hermitian if and only if its Hermitian and anti-Hermitian parts anti-commute. 
\end{theorem}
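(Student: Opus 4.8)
The plan is to reduce both the pseudo-Hermiticity condition and the anti-commutation condition to the single scalar requirement that $\det M$ be real. First I would set up the Pauli-matrix parametrization: writing $A=\tfrac12(M+M^\dagger)$ and $C=\tfrac12(M-M^\dagger)$ for the Hermitian and anti-Hermitian parts, the hypothesis $\mathrm{tr}(M)=0$ forces $\mathrm{tr}(A)=\mathrm{tr}(C)=0$, so there are real vectors $\vec a,\vec b\in\mathbb{R}^3$ with $A=\vec a\cdot\vec\sigma$ and $C=i\,\vec b\cdot\vec\sigma$, and $M=(\vec a+i\vec b)\cdot\vec\sigma$. Using the identity $(\vec u\cdot\vec\sigma)(\vec v\cdot\vec\sigma)=(\vec u\cdot\vec v)I+i(\vec u\times\vec v)\cdot\vec\sigma$, valid for arbitrary (possibly complex) vectors, one gets $\{A,C\}=i\{\vec a\cdot\vec\sigma,\vec b\cdot\vec\sigma\}=2i(\vec a\cdot\vec b)\,I$; hence the Hermitian and anti-Hermitian parts anti-commute \emph{if and only if} $\vec a\cdot\vec b=0$.

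Next I would pin down $\det M$. By Cayley--Hamilton a traceless $2\times2$ matrix satisfies $M^2=-\det(M)\,I$, while the same Pauli identity gives $M^2=\big((\vec a+i\vec b)\cdot(\vec a+i\vec b)\big)I=(|\vec a|^2-|\vec b|^2+2i\,\vec a\cdot\vec b)\,I$. Comparing, $\det M=|\vec b|^2-|\vec a|^2-2i(\vec a\cdot\vec b)$, so $\det M\in\mathbb{R}\iff\vec a\cdot\vec b=0$. It therefore remains only to prove the equivalence $M\ \text{pseudo-Hermitian}\iff\det M\in\mathbb{R}$.

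For the ($\Rightarrow$) direction, $M=\eta^{-1}M^\dagger\eta$ makes $M$ and $M^\dagger$ similar, hence they have the same characteristic polynomial; since both are traceless this is $t^2+\det M=t^2+\det M^\dagger=t^2+\overline{\det M}$, forcing $\det M\in\mathbb{R}$. For ($\Leftarrow$), I would display the metric explicitly: assuming $\vec a\cdot\vec b=0$, pick a real unit vector $\vec n$ with $\vec n\parallel\vec a$ if $\vec a\neq0$, or any real unit $\vec n\perp\vec b$ if $\vec a=0$ (and $\eta=I$ if $M=0$), and set $\eta=\vec n\cdot\vec\sigma$. Then $\eta$ is Hermitian and invertible since $\eta^2=I$, and a short computation with the Pauli identity—using $\vec n\times\vec a=0$ and $\vec n\cdot\vec b=0$—gives $\eta M=(\vec n\cdot\vec a)I-(\vec n\times\vec b)\cdot\vec\sigma=M^\dagger\eta$, so $\eta M\eta^{-1}=(\vec a-i\vec b)\cdot\vec\sigma=M^\dagger$ and $M$ is pseudo-Hermitian. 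Alternatively, one can argue abstractly that $\det M\in\mathbb{R}$ together with $\mathrm{tr}\,M=0$ makes $M$ and $M^\dagger$ share a Jordan form, and then invoke the standard fact that similarity to the adjoint is always implemented by some Hermitian invertible intertwiner.

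The two Pauli-algebra computations are routine. The one step needing genuine care is the ($\Leftarrow$) construction: one must choose $\vec n$ so that $\eta$ is actually invertible, which is why the degenerate cases $\vec a=0$ and $M=0$ are separated off. In the abstract alternative the analogous subtlety is the nilpotent case $\det M=0$, where equality of characteristic polynomials no longer implies similarity and one needs the extra input $\mathrm{rank}(M)=\mathrm{rank}(M^\dagger)$ together with the Hermitian-intertwiner lemma.
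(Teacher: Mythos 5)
Your proof is correct and follows essentially the same route as the paper: a Pauli-vector parametrization reducing anti-commutation to $\vec a\cdot\vec b=0$, the forward direction via reality of the characteristic polynomial (which you derive directly from similarity of $M$ and $M^\dagger$ rather than citing it), and the converse via the explicit metric $\eta=\vec n\cdot\vec\sigma$ with $\vec n\parallel\vec a$, exactly the paper's choice $\vec\alpha_+(H)\cdot\vec\sigma$. The only differences are cosmetic (Cayley--Hamilton in place of the determinant-of-a-sum formula, and an explicit metric in the $\vec a=0$ case where the paper simply notes the matrix is anti-Hermitian), and your handling of the degenerate cases is sound.
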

Inspired by Olga Taussky’s remark that the sum of interesting matrices is always worth studying~\cite{Taussky1988}, this result motivates the broader consideration of sums of anti-commuting Hermitian and anti-Hermitian components as a strategy for constructing higher-dimensional pseudo-Hermitian matrices. 

Such a pair of anti-commuting matrices naturally emerges in the context of tight-binding models on bipartite graphs \cite{Coulson1940,Ruedenberg1953}. Before we display this pair in \cref{defn:T-and-chi}, we summarize some requisite elementary definitions from graph theory. A directed graph, $G = (V, E)$, is a set of vertices, $V$, and directed edges, $E \subseteq V \times V$. A bipartite graph is one whose vertices can be partitioned into two disjoint subsets, $V = V_+ \cup V_-$, with $V_+ \cap V_- = \emptyset$, in such a way that no edge connects two vertices in the same subset. Alternatively, bipartite graphs are those that can be assigned a proper 2-coloring, $c:V \to \{\pm 1\}$, in which case we identify $V_\pm = c^{-1}[\{\pm 1\}]$. Prominent examples of bipartite graphs include trees, even cycles, hypercubes, and subgraphs of the hexagonal lattice. To simplify the following discussion, we assume $V$ has finite cardinality; a discussion of the case where $V$ has countably-infinite cardinality is postponed until \cref{app:generalization}.\\

We now introduce the aforementioned pair of anti-commuting matrices, $(T, \chi)$, as block matrices acting on the Hilbert space $\mathbb{C}^{V_+} \oplus \mathbb{C}^{V_-}$,
\begin{align}
    T := \begin{pmatrix}
    0 & T_+ \\
    T_- & 0
\end{pmatrix} &\quad& \chi := \begin{pmatrix}
    \mathbb{1}_{V_+} & 0 \\
    0 & -\mathbb{1}_{V_-}
\end{pmatrix}, \label{defn:T-and-chi}
\end{align}
where $T_{\pm}:\mathbb{C}^{V_\mp} \to \mathbb{C}^{V_\pm}$ are generic linear maps and $\mathbb{1}_{V_\pm}$ denotes the identity operator on $\mathbb{C}^{V_\pm}$. $T$ can be interpreted as a hopping Hamiltonian on the graph $G$ if $T$ is a \textit{weighted adjacency matrix} of $G$, which means its nonzero matrix elements correspond to edges in $G$ or, more explicitly, $(T_{\pm})_{uv} \neq 0 \, \Rightarrow  (v, u) \in E$. Every instance of $T$ is a weighted adjacency matrix of the complete bipartite graph $K_{|V_+|, |V_-|}$. 

Dynamics governed by the Hamiltonian $T$ exhibit \textit{chiral symmetry}, since there exists a unitary operator, $\chi$, that anti-commutes with the Hamiltonian \cite{Kawabata2019}. The chiral operator defines \textit{left-handed} and \textit{right-handed} states (not to be confused with left-handed or right-handed circularly polarized light), which in this case are physically interpreted as the states that are localized to vertices of colors $+1$ and $-1$, respectively. Explicitly, we have
\begin{align}
    \mathcal{H}_+ := \mathbb{C}^{V_+} \oplus \{0_-\}, &\quad& \mathcal{H}_- := \{0_+\} \oplus \mathbb{C}^{V_-},
\end{align}  
where $0_\pm \in \mathbb{C}^{V_{\pm}}$ denotes the zero vector. Equivalently, we could define $ \mathcal{H_{\pm}} = \text{Ran}(1 \pm \chi)$, where $\text{Ran}$ denotes the range of an operator. 

The square of $T$,
\begin{equation}
H_{\mathrm{SUSY}} := T^2 = \begin{pmatrix} T_+ T_- & 0 \\ 0 & T_- T_+ \end{pmatrix},
\end{equation}
is supersymmetric in the sense defined by \cite{Znojil2002,Znojil2004,Bagarello2020}.
When $T_{+} = T_{-}^\dagger$, $H_{\text{SUSY}}$ is supersymmetric in the sense defined by Witten~\cite{Witten1981,Witten1982}. The two blocks $T_+ T_-$ and $T_- T_+$ are isospectral up to zero modes \cite{Sylvester1883}, so the nonzero eigenvalues of $H_{\mathrm{SUSY}}$ are degenerate. This degeneracy is due to the chiral symmetry of $T$; $\chi$ defines a one-to-one correspondence between the eigenspaces of $T$ with eigenvalues $\pm \mu$, which means the subspace generated by these two eigenspaces is an eigenspace of $H_{\mathrm{SUSY}}$ with the eigenvalue $\mu^2$ that is degenerate when $\mu \neq 0$. 


\subsection{Non-Hermitian Hamiltonian and Spectrum}

We now introduce a non-Hermitian perturbation of $T$, 
\begin{align}
H_{\text{NH}} = T + i \gamma \chi &\quad& \gamma \in \mathbb{R}, \label{eqn:HNH-def}
\end{align}
induced by balanced gain and loss on the $+$ and $-$ sites, respectively. This generalizes the class of Hamiltonians studied in \cite{MartnezMartnez2024}, which assumed  $|V_+| = |V_-|$ and that $T_+ = T_-^\dag$ is a matrix with real coefficients. Henceforth, $T$ is referred to as the passive term of $H_{\text{NH}}$. This Hamiltonian 
exemplifies an instance of the "Freshman's dream", \begin{align}
    H_{\text{NH}}^2 = (T + i \gamma \chi)^2 = T^2 - \gamma^2 \chi^2.
\end{align}
Since $\chi^2 = \mathbb{1}$, the spectral properties of $H_{\text{NH}}$ are determined by those of $H_{\text{SUSY}}$. In particular, we use this observation to identify real-to-complex spectral phase transitions in $H_{\mathrm{NH}}$. Given a real eigenvalue, $\mu$, of $H_{\text{SUSY}}$, we find:
\begin{itemize} 
\item If $\mu > \gamma^2$, then $\pm\sqrt{\mu - \gamma^2}$ are real eigenvalues of $H_{\mathrm{NH}}$. 
\item If $0 \neq \mu < \gamma^2$, then the spectrum of $H_{\text{NH}}$ includes purely imaginary pairs, $\pm i \sqrt{\gamma^2 - \mu}$. 
\item If $0 \neq \mu = \gamma^2$, then $H_{\text{NH}}$ exhibits a second-order exceptional point (EP), owing to the existence of a Jordan chain containing a zero mode. This is elaborated upon in \cref{thm:multiplicity}.
\item If $\mu=0$, the kernel of $T$ is generated by two subspaces comprised of left-handed and right-handed zero modes. Each of these subspaces is an eigenspace of $H_{\mathrm{NH}}$ with a purely imaginary eigenvalue: $+i \gamma$ in the left-handed case and $-i \gamma$ in the right-handed case.
\end{itemize}

The zero modes in $T$ are not robust to the non-Hermitian perturbation $i \gamma \chi$. This is in contrast to perturbations respecting chiral symmetry, where the zero modes are symmetry protected \cite{Hasan2010}.

When $H_{\text{SUSY}}$ is quasi-Hermitian, the elements of the spectrum of $H_{\text{NH}}$ appear in complex-conjugate pairs if and only if the numbers of left-handed and right-handed zero modes coincide, i.e. $\dim \ker T_+ = \dim \ker T_-$. If $T$ contains no zero modes and $H_{\text{SUSY}}$ has only real eigenvalues, then a parameter regime for $\gamma$ exists where $H_{\text{NH}}$ has only real eigenvalues. Beyond the supersymmetric spectral radius, $\gamma^2 > \text{max} (\text{sp}(H_{\text{SUSY}}))$, 
where $\text{sp}(A)$ denotes the spectrum of the operator $A$, the spectrum of $H_{\text{NH}}$ becomes purely imaginary.

We can compactly summarize the above discussion by writing the following expression for the spectrum of $H_{\text{NH}}$,
\begin{align}
    \text{sp}(H_{\text{NH}}) = &\left\{\pm \sqrt{\lambda^2 - \gamma^2}\,|\, \lambda \in \text{sp}(T)\setminus \{0\} \right\} \cup \nonumber\\
& \left\{
\begin{array}{c|l}
s \gamma & s \in \{-, +\} \text{ and } \\
& \ker(T) \cap \mathcal{H}_s \neq \{0\}
\end{array}
\right\},
\end{align}
a structure first identified in \cite{Sutherland1986}. \\

\subsection{Symmetries}

In this section, we elaborate upon the symmetries present in the non-Hermitian Hamiltonian $H_{\text{NH}}$. We first discuss generalized time-reversal symmetry, then pseudo-Hermiticity and quasi-Hermiticity, and finally particle-hole symmetry. 


In some cases, we can identify a generalized time-reversal symmetry in $H_{\text{NH}}$. Suppose there exists a relabeling of the vertices that swaps their colors while preserving the graph's edge structure; explicitly, such a relabeling is a color-reversing graph automorphism, i.e. a bijection, $\phi:V \to V$, such that $(\phi(v), \phi(w)) \in E$ for all $(v,w) \in E$ and $c(\phi(v)) = -c(v)$ for all $v \in V$. If such a relabeling exists and the corresponding automorphism maps couplings into their complex-conjugates, i.e. $T_{ij} = T_{\phi(i) \phi(j)}^*$, the Hamiltonian $H_{\text{NH}}$ admits a generalized time‑reversal symmetry, $\Theta_\phi$, defined by $(\Theta_\phi \psi)_v := \psi^*_{\phi(v)}$ for all $\psi \in \mathbb{C}^V$. Notably, there exist bipartite graphs that do not have any color‑reversing automorphism. Despite this apparent lack of symmetry, the corresponding non-Hermitian Hamiltonians, $H_{\text{NH}}$, can still exhibit real-to-complex spectral transitions. Thus, the observed real‑to‑complex spectral transition cannot be attributed to traditional $\mathcal{PT}$ or anti‑$\mathcal{PT}$ symmetry.\\ 

For the remainder of this section, we consider an abstracted setting that contains the previously considered Hamiltonians; let $x$ and $y$ denote any pair of anti-commuting operators and let $H = x + i \gamma y$, where $\gamma \in \mathbb{R}$. One choice for $x$ and $y$ could be the matrices $T$ and $\chi$ presented earlier, respectively, in which case $H = H_{\text{NH}}$. Additionally, we assume that $x$ and $y$ are pseudo-Hermitian with the metric operator $\eta$.

Next, we turn our attention to the pseudo-Hermiticity and quasi-Hermiticity of $H$. Whenever $x$ is invertible, $H$ is pseudo-Hermitian and the set $\{\eta x^{m} \, | \, m \in 2 \mathbb{Z}+1 \}$ constitutes a family of indefinite metric operators. If $\eta$ is positive-definite, the choice of indefinite metric $\eta x$ leads to a nice geometric characterization of the eigenstates of $H$: if $\psi$ is a nonzero eigenstate of $H$ with the eigenvalue $\lambda$ such that $\braket{\psi|\eta \psi} = 1$, we deduce the relations 
\begin{align}
    \text{Re}(\lambda) = \braket{\psi|\eta x \psi} &\quad& \text{Im}(\lambda) = \gamma \braket{\psi|\eta y \psi} \label{eqn:indefinite-ev}
\end{align}
by considering the expectation value of $\eta H$ in the state $\psi$. 

In the following theorem, which was initially proven in \cite[\S 4.1]{Barnett2023PhD}, we deduce criteria that guarantee the quasi-Hermiticity of $H_{\text{NH}}$. When these criteria hold, we find a positive-definite metric operator associated with $H_{\text{NH}}$.  

\begin{theorem} \label{thm:Quasi-Hermitian}
Let $x, y$ be anti-commuting quasi-Hermitian matrices with the positive-definite metric $\eta$ such that $x$ is invertible and $y^2 = \mathbb{1}$. Additionally, let $\gamma \in \mathbb{R}$ be a real parameter. Then, the following are equivalent:
\begin{enumerate}
    \item $H := x + i \gamma y$ has a real spectrum.
    \item $|\gamma| \leq \min |{\normalfont \text{sp}}(x)|$ 
    \item $\eta_{QH} := \mathbb{1} + i \gamma x^{-1} y$ is positive. 
\end{enumerate}
For any $\gamma \in \mathbb{R}$, $H$ is pseudo-Hermitian, since $H = x^{-1} H^* x$. Furthermore, $\eta_{QH} H = H^\dag \eta_{QH}$.
\end{theorem}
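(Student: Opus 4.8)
The plan is to conjugate away the metric so that $x$ and $y$ become genuinely Hermitian, and then to exploit the two Clifford-type relations $xy+yx=0$ and $y^2=\mathbb{1}$, which render $H^2$, $\eta_{QH}$ and $H^\dagger$ all computable in closed form.

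\emph{Step 1: reduction to the Hermitian case.} Since $\eta$ is positive-definite, $\rho:=\eta^{1/2}$ is invertible, and quasi-Hermiticity says exactly that $\tilde x:=\rho x\rho^{-1}$ and $\tilde y:=\rho y\rho^{-1}$ are Hermitian; they still anti-commute, $\tilde y^2=\mathbb{1}$, and $\tilde x$ is invertible. The similarity $M\mapsto\rho M\rho^{-1}$ preserves spectra, carries $H$ to $\tilde H=\tilde x+i\gamma\tilde y$ and $\eta_{QH}$ to $\mathbb{1}+i\gamma\tilde x^{-1}\tilde y$, and turns ``$\eta_{QH}$ is positive'' (positivity of the $\eta$-self-adjoint operator $\eta_{QH}$ with respect to the $\eta$-inner product) into ordinary positivity of $\mathbb{1}+i\gamma\tilde x^{-1}\tilde y$. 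Hence we may assume from now on that $\eta=\mathbb{1}$ and $x,y$ are Hermitian; note $\min|\text{sp}(x)|>0$ because $x$ is Hermitian and invertible.

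\emph{Step 2: $(1)\Leftrightarrow(2)$.} Expanding the square and using the Clifford relations gives the ``Freshman's dream'' identity $H^2=x^2-\gamma^2\mathbb{1}$, so $\text{sp}(H^2)=\{\lambda^2-\gamma^2:\lambda\in\text{sp}(x)\}$, while the spectral mapping theorem gives $\text{sp}(H^2)=\{\mu^2:\mu\in\text{sp}(H)\}$. If every $\mu\in\text{sp}(H)$ is real, then every $\lambda^2-\gamma^2\ge 0$, i.e.\ $|\gamma|\le\min|\text{sp}(x)|$. Conversely, if $|\gamma|\le\min|\text{sp}(x)|$ then $\text{sp}(H^2)\subseteq[0,\infty)$, so any $\mu\in\text{sp}(H)$ has $\mu^2\ge0$ and is therefore real.

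\emph{Step 3: $(2)\Leftrightarrow(3)$ and the ``furthermore''.} Write $\eta_{QH}=\mathbb{1}+A$ with $A:=i\gamma x^{-1}y$. Using $yx^{-1}=-x^{-1}y$ one checks that $A$ is Hermitian, that $xA=-Ax$, and that $A^2=\gamma^2x^{-2}$. Anti-commutation with the invertible $x$ forces $\text{sp}(A)$ to be symmetric about $0$, so with Hermiticity we get $A\ge-\mathbb{1}\iff\|A\|\le1\iff\|A^2\|=\gamma^2\|x^{-1}\|^2=\gamma^2/(\min|\text{sp}(x)|)^2\le1\iff|\gamma|\le\min|\text{sp}(x)|$; since $\eta_{QH}\ge0\iff A\ge-\mathbb{1}$, this is $(3)\Leftrightarrow(2)$. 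For the last claims, $xHx^{-1}=x-i\gamma y=H^\dagger$ yields the pseudo-Hermiticity (in the original variables this is the member $m=1$ of the metric family $\{\eta x^{m}\}$ recorded just before the theorem), and expanding $\eta_{QH}H$ and $H^\dagger\eta_{QH}$ and simplifying with the Clifford relations shows both equal $x-\gamma^2x^{-1}$, whence $\eta_{QH}H=H^\dagger\eta_{QH}$.

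\emph{Main obstacle.} Steps 1 and 2 are essentially bookkeeping; the only place needing a genuine idea is $(2)\Leftrightarrow(3)$, and within it the observation that $\{x,A\}=0$ makes $\text{sp}(A)$ symmetric — without this, $\eta_{QH}\ge0$ would not imply $(2)$, only $(1)$-type information. A minor, purely notational point is fixing what ``positive'' means for $\eta_{QH}$ (and the adjoints in the final identities) when $\eta\neq\mathbb{1}$; Step 1 is exactly what licenses reading those statements in the Hermitian picture.
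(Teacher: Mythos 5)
Your proposal is correct and follows essentially the same route as the paper's proof: reduce to the $\eta=\mathbb{1}$ (Hermitian) case, use the Freshman's-dream identity $H^2=x^2-\gamma^2\mathbb{1}$ for $(1)\Leftrightarrow(2)$, and exploit the self-adjointness and $0$-symmetric spectrum of $i\gamma x^{-1}y$ together with $\|x^{-1}\|^{-1}=\min|\mathrm{sp}(x)|$ for $(2)\Leftrightarrow(3)$. The only differences are cosmetic: the paper performs your Step 1 abstractly by redefining the involution and norm in a $C^*$-algebra $\mathfrak{A}_\eta$ (which also covers infinite dimensions), obtains $\|x^{-1}y\|=\|x^{-1}\|$ via the $C^*$-identity instead of your $A^2=\gamma^2x^{-2}$, and gets $\eta_{QH}H=H^\dagger\eta_{QH}$ from $\eta_{QH}=x^{-1}H$ rather than by direct expansion.
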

We present a proof of this theorem in \cref{app:C*-thm-proofs}.

Theorem~\ref{thm:multiplicity} will show that the Hamiltonian, $H$, of \cref{thm:Quasi-Hermitian} has a second-order exceptional point at $|\gamma| = \min|\text{sp}(x)|$. Thus, we conclude that $H$ is diagonalizable with a real spectrum if and only if $H$ is quasi-Hermitian with $\eta_{\text{QH}}$ as a positive-definite metric.



Particle-Hole symmetry \cite{Schomerus2013,Ge2017} is a constraint that forces the spectrum of a Hamiltonian to appear in pairs: whenever $\lambda$ is an eigenvalue, so is $-\lambda^*$. In other words, the complex energy levels are symmetric with respect to a reflection about the imaginary axis. Particle-Hole symmetry is algebraically imposed on the Hamiltonian $H$ via the condition
\begin{align}
    H = -(y \eta)^{-1} H^\dag (y \eta).
\end{align}
When $H^2$ has a real spectrum, we emphasize that $H$ has an additional spectral constraint not present in generic  systems with particle-hole symmetry; the eigenvalues of $H$ are either real or purely imaginary. In contrast, an arbitrary operator with particle-hole symmetry may have eigenvalues whose real and imaginary parts are both nonzero. For example, this can happen when $\eta$ is positive, owing to the quasi-Hermiticity of $H^2$ with respect to the metric $\eta$ \cite{Hill1969}.
\subsection{Illustrations}

In this section, we discuss the properties of four examples of Hamiltonians in our framework, defined in \cref{eqn:HNH-def}. We show their graph structure in \cref{fig:bipartite-graphs-JB} and plot their spectrum in \cref{fig:eval-plots}. These examples showcase how our  framework simplifies existing analysis of some $\mathcal{PT}$-symmetric Hamiltonians while also including Hamiltonians that are not $\mathcal{PT}$-symmetric in a geometric sense.  

\begin{figure}[!ht]
    \centering
    \includegraphics[width=0.9\linewidth]{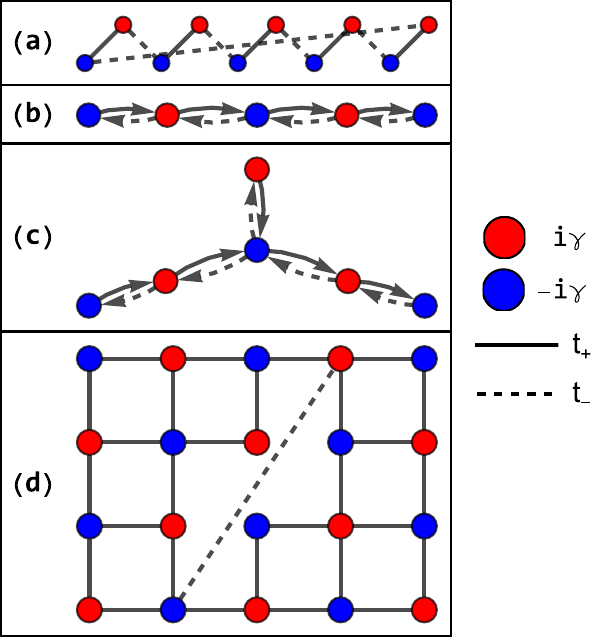}
    \caption{Depiction of four bipartite graphs endowed with non-Hermitian Hamiltonians. Balanced gain, $i \gamma$, and loss, $-i \gamma$, potentials are applied to vertices colored red and blue, respectively. These Hamiltonians admit two distinct nonzero coupling parameters, $t_+$ (solid) and $t_-$ (dashed). 
    These examples showcase how the Hamiltonians in our class may or may not exhibit: non-uniform couplings, as in panels $(a)$ and $(d)$, non-reciprocal couplings, as in panels $(b$–$c)$, or long-range couplings, as in panel $(d)$. The vertices may be elements of spaces with any number of dimensions, highlighted by the graphs in $(c$–$d)$ whose embeddings require more than one dimension. Although panels $(a)$ and $(d)$ depict $\mathcal{PT}$-symmetric dynamics, the Hamiltonians of panels $(b$–$c)$ do not possess a geometric antiunitary symmetry, since no color-reversing weighted graph automorphism exists.}
    \label{fig:bipartite-graphs-JB}
\end{figure}

Panel $(a)$ depicts a Hamiltonian referred to as the complex Su-Schrieffer-Heeger (cSSH) model \cite{Schomerus2013,Lieu2018}. The passive limit, $\gamma \to 0$, was introduced by Su, Schrieffer, and Heeger in \cite{Su1980}. 

When $t_\pm = e^{\pm g}$ for some $g\in \mathbb{R}$ and $\gamma = 0$, panel $(b)$ depicts the celebrated Hatano-Nelson Hamiltonian \cite{HatanoNelson1996Localization} with open boundary conditions. 
The Hatano-Nelson Hamiltonian with open boundary conditions is quasi-Hermitian, as one can verify by constructing a similar Hermitian Hamiltonian \cite{Rozsa1969PeriodicContinuants,williams1969operators}. 
It describes dynamics on a one-dimensional path graph—an example of a tree, i.e. a graph without cycles. In \cref{thm:tree}, we generalize our understanding of the open-boundary Hatano-Nelson Hamiltonian by proving that Hamiltonians with non-reciprocal couplings on arbitrary trees, such as the one in panel $(c)$, are also quasi-Hermitian in the absence of on-site potentials. This quasi-Hermiticity ensures the spectrum of the corresponding supersymmetric Hamiltonian, $H_{\text{SUSY}}$, is real and, consequently, that the non-Hermitian Hamiltonian, $H_{\text{NH}}$, is pseudo-Hermitian. 
\begin{theorem} \label{thm:tree}
    Let $G = (V, E)$ be a connected and directed tree and let $T \in \mathbb{C}^{V \times V}$ be a weighted adjacency matrix of $G$ satisfying $T_{uv} T_{vu} > 0$ for all $(u,v) \in E$. Then, $T$ is quasi-Hermitian. For every vertex, $r \in V$, there exists a unique positive-definite diagonal matrix, $\eta$, that is a metric for $T$ such that $\eta_{rr} = 1$. The matrix elements of this metric are given in \cref{eqn:dipath-metric}.
\end{theorem}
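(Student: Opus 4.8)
The plan is to look for the metric in diagonal form and reduce the operator identity $\eta T = T^{\dagger}\eta$ to a family of constraints living on single edges. Writing $\eta = \operatorname{diag}(\eta_v)_{v\in V}$ with $\eta_v := \eta_{vv}$, the $(u,v)$ entry of $\eta T = T^{\dagger}\eta$ reads $\eta_u T_{uv} = \overline{T_{vu}}\,\eta_v$. This is vacuous when $u=v$ (a tree has no loops, so $T_{vv}=0$) and vacuous when $\{u,v\}$ is not an edge (then $T_{uv}=T_{vu}=0$ since $T$ is a weighted adjacency matrix), so the only content is: for each edge $\{u,v\}$, $\eta_u/\eta_v = \overline{T_{vu}}/T_{uv}$. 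The first thing I would record is that the hypothesis $T_{uv}T_{vu}>0$ turns this into a prescription of a \emph{positive real} ratio, since $\overline{T_{vu}}/T_{uv} = \overline{T_{uv}T_{vu}}/|T_{uv}|^{2} = T_{uv}T_{vu}/|T_{uv}|^{2}>0$; moreover the constraint obtained by swapping $u\leftrightarrow v$ is exactly its reciprocal, so each edge imposes one well-defined positive ratio $\rho_{uv}:=\eta_u/\eta_v$.

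Next I would use the tree structure to integrate these ratios. Fix the root $r$, set $\eta_r=1$, and for any other vertex $w$ take the unique path $r=v_0,v_1,\dots,v_k=w$ in $G$; reading the edge constraints along this path forces $\eta_w = \prod_{j=1}^{k}\rho_{v_{j-1}v_j}^{-1}$, which gives uniqueness at once. Existence — that is, consistency of this prescription — is automatic precisely because a tree is acyclic: there is no second path joining $r$ to $w$ along which a competing value could be propagated, so the cocycle obstruction that would appear on a graph with cycles is absent, while connectedness guarantees every vertex is reached. Each factor is a positive real, so every $\eta_w>0$ and $\eta$ is positive-definite. By construction $\eta T = T^{\dagger}\eta$ with $\eta$ positive, so $T = \eta^{-1}T^{\dagger}\eta$ is quasi-Hermitian (equivalently, $\eta^{1/2}T\eta^{-1/2}$ is Hermitian), and the displayed product over the edges of the $r$-to-$w$ dipath is the claimed formula \cref{eqn:dipath-metric}.

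The bulk of this is bookkeeping, and the only genuinely load-bearing point — the place where a naive attempt could go wrong — is the consistency of the edge-local prescription: on a general graph the product of the $\rho$'s around a cycle need not equal $1$, so no diagonal metric of this form would exist, and it is exactly the tree hypothesis (acyclicity, together with connectedness so that the root sees every vertex) that removes this obstruction and simultaneously delivers uniqueness. A secondary point I would make explicit is that it is the strict positivity $T_{uv}T_{vu}>0$, and not merely $T_{uv},T_{vu}\neq 0$, that makes the ratios real and positive, hence makes $\eta$ a genuine positive-definite metric rather than an indefinite one.
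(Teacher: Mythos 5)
Your proof is correct and follows essentially the same route as the paper: reduce $\eta T = T^{\dagger}\eta$ for diagonal $\eta$ to the per-edge recurrence $\eta_{uu} T_{uv} = \overline{T_{vu}}\,\eta_{vv}$, propagate it along the unique path from the root $r$ to each vertex, and use $T_{uv}T_{vu}>0$ for positivity, with your remarks on acyclicity (consistency) and on the positivity of the edge ratios simply making explicit what the paper leaves implicit. One small note: the per-edge factor you derive, $T_{xy}/\overline{T_{yx}}$, is the reciprocal of the factor displayed in \cref{eqn:dipath-metric}, and it is your version that is consistent with the recurrence as written, so the discrepancy lies in the paper's displayed formula rather than in your argument.
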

This theorem is proven in \cref{app:tree-proof}. Since the diagonal metric given in \cref{eqn:dipath-metric} commutes with $\chi$, if $T$ is defined as in \cref{thm:tree}, the resulting non-Hermitian Hamiltonian, $H_{\text{NH}}$, respects particle-hole symmetry.

\begin{figure}[!ht]
    \centering
    \includegraphics[width=0.9\linewidth]{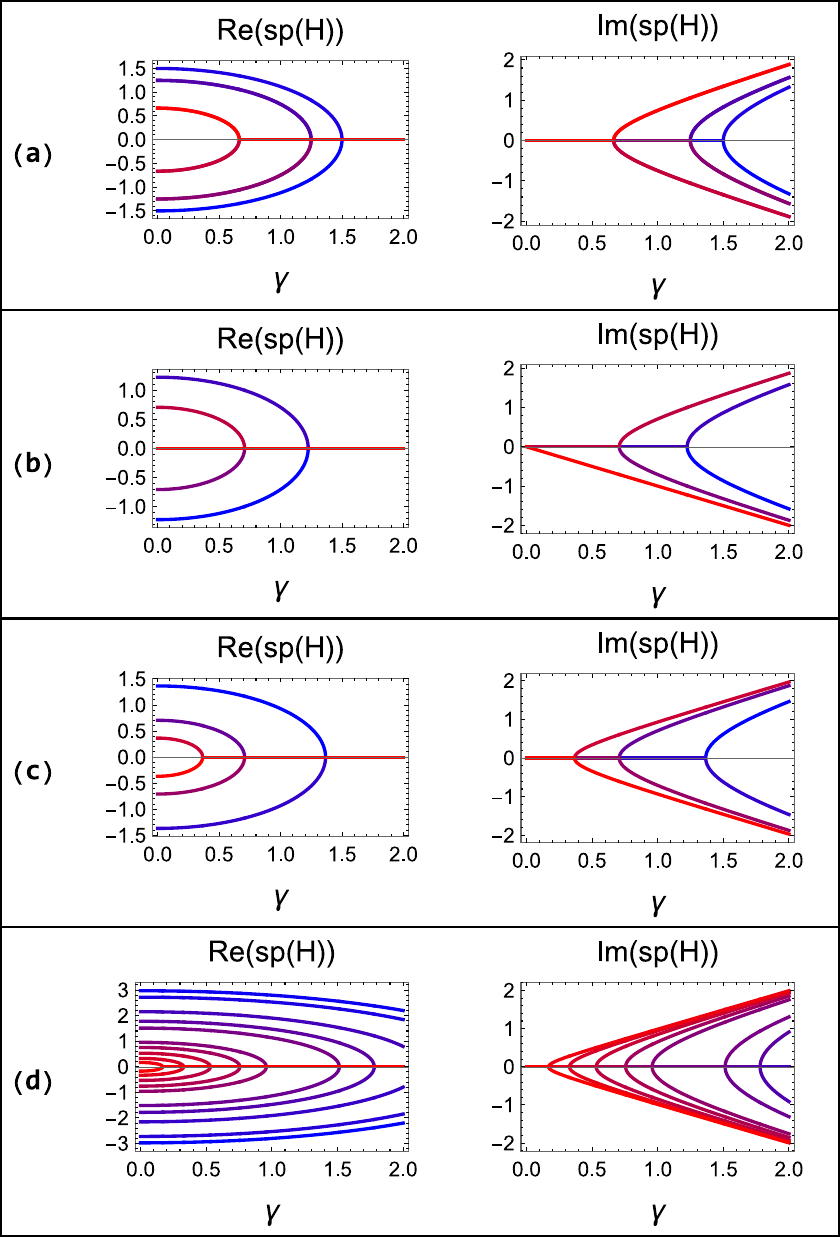}
    \caption{Spectrum of the Hamiltonians with graph structure depicted in \cref{fig:bipartite-graphs-JB}, characterized by the real parameter $\gamma$. The couplings are taken to be $(t_+, t_-) = (1,1/2)$. In all cases, the spectrum exhibits $\mathcal{PT}$-like phase transitions where pairs of real eigenvalues bifurcate into the complex plane in complex-conjugate pairs. As discussed in the main text, exceptional points exist at locations determined by the spectrum of the corresponding passive system, $\gamma = 0$. The threshold of this transition is nonzero if and only if the passive system has no zero modes. One zero mode exists in the passive limit of panel $(b)$; this is because the difference in its number of gain (red) and loss (blue) sites is one, resulting in imbalanced gain and loss and a single purely imaginary eigenvalue without a complex-conjugate counterpart.
    }
    \label{fig:eval-plots}
\end{figure}

\subsection{
\texorpdfstring{$q$-commutation}{q-commutation}
} \label{sec:q-deform}

A Hamiltonian, $H_{\text{NH}}$, of the form defined in \cref{eqn:HNH-def} exhibits second-order exceptional points. In this section, we generalize this construction to include Hamiltonians with $n$-th order exceptional points for any natural number $n$. We start by selecting a primitive $n$-th root of unity, $q \in \mathbb{C}$. Then, the matrix $y$ is said to $q$-\textit{commute} with $x$ if
\begin{align}
    x y = q y x. \label{eqn:q}
\end{align} 
This $q$-deformed commutation relation continuously interpolates between commutation and anti-commutation, which correspond to $q = 1$ and $q = -1$, respectively. We note this kind of deformed commutation relation is familiar to communities studying quantum groups or noncommutative spacetimes, where the quantum plane is an abstract algebra generated by a pair of $q$-commuting elements \cite{Manin1987,Manin2018}.

After fixing two $q$-commuting matrices, $x$ and $y$, we define a Hamiltonian, $H$, that depends on a complex parameter, $\gamma \in \mathbb{C}$, by 
\begin{align}
    H = x + \gamma y. \label{eqn:Ham}
\end{align}
The Hamiltonians adapted to bipartite graphs that were examined earlier correspond to the choice of $x = T$ and $y = i \chi$, where $n = 2$. An example of a Hamiltonian of this form with $n > 2$ is the generalized Hatano-Nelson model of \cite{gohsrich2024exceptional}.

Before addressing the spectrum of $H$ in the general case, we start by displaying some examples of matrices satisfying \cref{eqn:q}. Simple examples of $q$-commuting matrices are Sylvester's \textit{clock} and \textit{shift} matrices \cite{Sylvester1882Nonions},
\begin{align}
    x = {\scriptstyle \begin{bmatrix}
        0 & 1 & 0 & \dots & 0 \\
        0 & 0 & 1 & \dots & 0 \\
        0 & 0 & 0 & \ddots & 0 \\
        \vdots & \ddots & \ddots & \ddots & 1  \\
        1 & \dots & 0 & \ddots & 0
        \end{bmatrix}} &\quad&
    y = {\scriptstyle \begin{bmatrix}
        1 & 0 & 0 & \dots & 0 \\
        0 & q & 0 & \dots & 0 \\
        0 & 0 & q^2 & \ddots & 0 \\
        \vdots & \vdots & \ddots & \ddots & \vdots  \\
        0 & 0 & 0 & \dots & q^{n-1}
        \end{bmatrix}}.
\end{align}
These are a special case of the canonical form for diagonalizable $q$-commuting matrices that was known to \cite{MCINTOSH1962169}. We present this form to the reader using the language of oriented graphs \cite{Sopena2001}.
An \textit{oriented graph} is a directed graph such that if $(u,v)$ is one of its edges, then $(v,u)$ is not an edge; in other words, all edges in an oriented graph are unidirectional. Suppose $G$ is an oriented graph with a finite vertex set that admits an \textit{oriented} $n$\textit{-coloring}, that is defined as a map, $c:V \to \mathbb{C}$, satisfying:
\begin{itemize}
    \item if $(u,v)$ is an edge, then $c(v) = q c(u)$
    \item if $c(u) = c(u')$ and $c(v) = c(v')$, then $(u, v')$ and $(v, u')$ cannot both be edges.
    \item the image of $c$ has $n \in \mathbb{N}$ elements.
\end{itemize}
Let $x$ be a weighted adjacency matrix of this graph and $y$ be the diagonal matrix with the elements $y_{vv} = c(v)$. Then, $x$ and $y$ satisfy \cref{eqn:q}. Sylvester's clock and shift matrices correspond to the case where $G$ is an oriented cycle. An example of an oriented graph is depicted in \cref{fig:oriented}.
\begin{figure}
    \centering
    \includegraphics[width=.9\columnwidth]{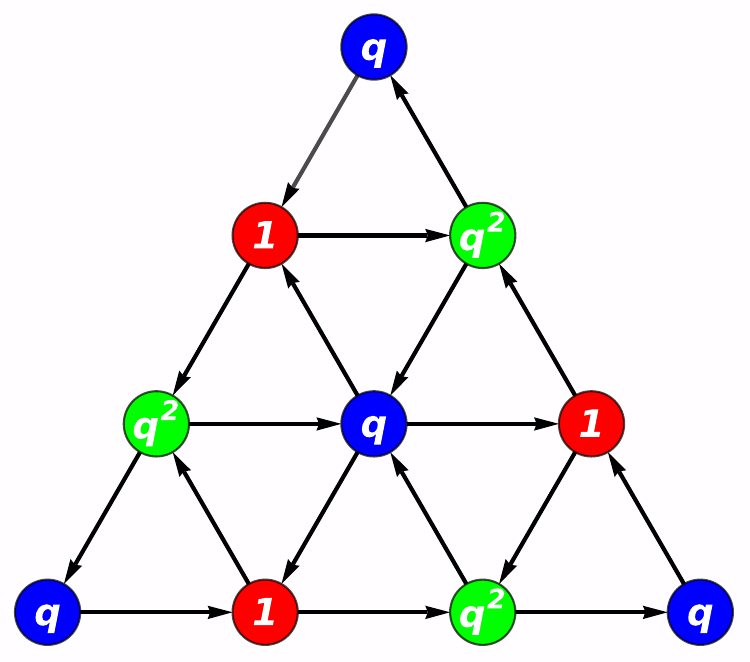}
    \caption{This directed variant of a triangular grid graph is an example of an oriented graph with chromatic number 3. Adapted to this graph is a pair of $q$-commuting matrices: one matrix, $x$, implements nonreciprocal couplings on the plaquettes; the other, $y$ implements an on-site complex potential taking values from the cube roots of unity.}
    \label{fig:oriented}
\end{figure}

In analogy with the preceding analysis for the $n = 2$ case, the task of solving the eigenvalue problem of $H$ is greatly simplified by observing the following instance of the Freshman's dream,
\begin{align}
    H^n &= (x + \gamma y)^n\nonumber \\
    &=x^n + \gamma^n y^n. \label{eqn:Freshman-Dream}
\end{align}
In fact, this instance of the Freshman's dream is a special case of the non-commutative binomial theorem \cite[Thm 5.1]{kac2002quantum} that was proven in \cite{schutzenberger1953interpretation,manin1991notes}, 
\begin{align}
    (x+\gamma y)^m  = \sum_{k = 0}^m \begin{bmatrix}
        m \\ k
    \end{bmatrix}_q (\gamma y)^{m-k} x^k , \label{thm:noncommuting-binomial}
\end{align}
where the \textit{Gaussian binomial coefficients} 
can be defined by a $q$-analog of Pascal's rule \cite{Gauß1808}
\begin{align}
    \begin{bmatrix}
        m \\
        0
    \end{bmatrix}_q &=
    \begin{bmatrix}
        m \\
        m
    \end{bmatrix}_q = 1  \\
    \begin{bmatrix}
        m + 1 \\
        k + 1 
    \end{bmatrix}_q &= \begin{bmatrix}
        m \\
        k+1
    \end{bmatrix}_q + q^{m-k}\begin{bmatrix}
        m \\
        k
    \end{bmatrix}_q, 
\end{align}
where $m, k \in \mathbb{N}$ and $k < m$.
Since $H^n$ is in the span of two commuting objects, $x^n$ and $y^n$, 
the spectrum of $H$ satisfies
\begin{align}
    \text{sp}(H)^n \subseteq \{\lambda^n + \gamma^n \mu^n \,|\, (\lambda, \mu) \in \text{sp}(x) \times \text{sp}(y)\}. \label{eqn:Superset-Of-Hamiltonian-Eigenvalues}
\end{align}
If $0 \notin \text{sp}(x) \cup \text{sp}(y)$ and there exists a linear combination of $x^n$ and $y^n$ that yields $\mathbb{1}$, i.e. $\mathbb{1} \in \text{span}\{x^n, y^n\}$, the set inequality of \cref{eqn:Superset-Of-Hamiltonian-Eigenvalues} is an equality.

We conclude this subsection by identifying symmetries that can be present in the spectrum of $H$. If $x$ and $y$ are invertible, a similarity transformation maps $H \to q H$,
\begin{align}
    x y^{-1} H y x^{-1} = q H, \label{eqn:q-commute-H}
\end{align}
in which case the spectrum of $H$ obeys the symmetry $\text{sp}(H) = q \text{sp}(H)$. If, additionally, $n > 2$, and $H$ has a nonzero eigenvalue, then $H$ must have a complex eigenvalue with a nonzero imaginary part.

Suppose $x$ is invertible and pseudo-Hermitian with the metric operator $\eta$. If $(\gamma y) = q^m \eta^{-1} (\gamma y)^\dag \eta$, where $m \in \mathbb{N}$, then $H$ is pseudo-Hermitian with $x^{m} \eta$ as a metric operator. 

\subsection{Eigenspaces} \label{sec:Eigenspaces}

Eigenvalues of matrices are characterized using their multiplicities. Computing multiplicities allows us to assign a precise mathematical meaning to statements regarding whether a Hamiltonian exhibits an exceptional point and, if so, what its order is. We consider two kinds of multiplicities.  If $A$ is an operator with the eigenvalue $\lambda$, the geometric multiplicity of $\lambda$ is 
the maximal number of linearly independent eigenvectors of $A$ with eigenvalue $\lambda$ and is denoted by $\mu_g(\lambda, A)$. The algebraic multiplicity of $\lambda$, which is denoted by $\mu_a(\lambda, A)$, can be defined either as this eigenvalue's multiplicity when considered as a root of the characteristic polynomial of $A$ or as the dimension of the generalized eigenspace corresponding to $\lambda$, as in \cref{eqn:alg-mult-def}. 

In the following theorem, proven in \cref{app:multiplicity-proof}, we characterize the multiplicities of the eigenvalues of the Hamiltonians, $H$, formed as a sum of $q$-commuting matrices, as in \cref{eqn:Ham}. We find that sets of $n$ eigenvectors coalesce into one at certain points in parameter space that we regard as exceptional points with order $n$.


\begin{theorem} \label{thm:multiplicity}
    Let $\lambda \in \mathbb{C} \setminus \{0\}$ be a nonzero eigenvalue of $x$, suppose $y^n = \mathbb{1}$, and enumerate the $n$-th roots of unity by $\omega_{k} := e^{2 \pi i k/n}$, where $k \in \mathbb{N}$. Then, 
    \begin{enumerate}
    \item 
        $\mu_g(\omega_k \sqrt[n]{\lambda^n + \gamma^n}, H) \geq \mu_g(\lambda, x).$\label{eqn:multiplicity}
    \item If $\gamma^n + \lambda^n = 0$, then
    $\mu_a(0,H) \geq n \mu_g(\lambda, x).$\label{eqn:alg-multiplicity}
    \item The preceding inequalities can be upgraded to equalities under the following circumstances: If either $\gamma = 0$ or $x$ is diagonalizable and invertible, then $\mu_g(\omega_k \sqrt[n]{\lambda^n + \gamma^n}, H) = \mu_g(\lambda, x)$. If $x$ is diagonalizable and invertible, then $\mu_a(0,H) = n \mu_g(\lambda, x) = n \mu_g(0, H)$ when $\gamma^n + \lambda^n = 0$, in which case we call $\gamma$ an $n$-th order exceptional point. \label{item:3}
    \end{enumerate}
\end{theorem}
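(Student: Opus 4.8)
The plan is to collapse the whole problem onto a single $n\times n$ matrix. Fix a nonzero eigenvalue $\lambda$ of $x$ and set $E_\mu:=\ker(x-\mu)$, so $d:=\dim E_\lambda=\mu_g(\lambda,x)$. From $xy=qyx$ and the invertibility of $y$ (since $y^n=\mathbb 1$) one gets $x\,y^{\,j}=q^{\,j}\,y^{\,j}x$, so $y$ carries $E_\mu$ isomorphically onto $E_{q\mu}$; hence $E_\lambda,E_{q\lambda},\dots,E_{q^{n-1}\lambda}$ are $d$-dimensional eigenspaces of $x$, pairwise distinct because $\lambda\neq 0$ and $q$ is a primitive $n$-th root of unity. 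Their direct sum $\mathcal V:=\bigoplus_{j=0}^{n-1}E_{q^{j}\lambda}$ is invariant under $x$ and $y$, hence under $H$. Choosing a basis $v_1,\dots,v_d$ of $E_\lambda$ and using $x(y^{j}v_i)=q^{j}\lambda\,y^{j}v_i$ together with $y(y^{j}v_i)=y^{j+1}v_i$ (indices mod $n$), I find that in this basis $H|_{\mathcal V}$ is block-diagonal with $d$ identical blocks equal to the $n\times n$ matrix $M:=\lambda\,\Omega+\gamma\,C$, where $\Omega=\mathrm{diag}(1,q,\dots,q^{n-1})$ and $C$ is the cyclic shift $Ce_j=e_{j+1\bmod n}$; note $\Omega C=qC\Omega$, so $M^{n}=(\lambda^{n}+\gamma^{n})\mathbb 1$ by the Freshman's dream \eqref{eqn:Freshman-Dream}. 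Consequently $\mu_g(\beta,H)\ge d\,\mu_g(\beta,M)$ and $\mu_a(\beta,H)\ge d\,\mu_a(\beta,M)$ for every $\beta\in\mathbb C$.

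Next I would settle the eigenstructure of $M$. If $\lambda^{n}+\gamma^{n}\neq 0$, pick an $n$-th root $\beta_0$ of $\lambda^{n}+\gamma^{n}$; since $t^{n}-(\lambda^n+\gamma^n)$ annihilates $M$ and is squarefree, $M$ is diagonalizable with $\mathrm{sp}(M)\subseteq\{\omega_k\beta_0\}_{k=0}^{n-1}$. To show every $\omega_k\beta_0$ actually appears I would apply the similarity $x y^{-1}Hyx^{-1}=qH$ of \eqref{eqn:q-commute-H} on $\mathcal V$ (where $x$ is invertible), which forces $\mathrm{sp}(M)=q\,\mathrm{sp}(M)$ and hence $\mathrm{sp}(M)=\{\omega_k\beta_0\}$; alternatively, a direct computation using \eqref{thm:noncommuting-binomial} and $\mathrm{tr}(C^{s}\Omega^{t})=0$ unless $n\mid s$ and $n\mid t$ gives $\mathrm{tr}(M^{j})=0$ for $1\le j\le n-1$, and solving the resulting Vandermonde/DFT system for the root multiplicities returns the value $1$ for each. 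Either way $M$ has $n$ simple eigenvalues, so $\mu_g(\omega_k\beta_0,M)=\mu_a(\omega_k\beta_0,M)=1$. If instead $\lambda^{n}+\gamma^{n}=0$ (which forces $\gamma\neq 0$), then $M^{n}=0$ and a one-line computation gives $M^{\,n-1}e_0=\gamma^{\,n-1}e_{n-1}+(\text{terms in }e_0,\dots,e_{n-2})\neq 0$; an $n\times n$ nilpotent matrix with $M^{n-1}\neq 0$ is a single Jordan block, so $\mu_g(0,M)=1$ and $\mu_a(0,M)=n$.

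Assembling these: Statement~1 is $\mu_g(\omega_k\sqrt[n]{\lambda^{n}+\gamma^{n}},H)\ge d\cdot1=\mu_g(\lambda,x)$, and Statement~2 is $\mu_a(0,H)\ge d\cdot n=n\,\mu_g(\lambda,x)$ when $\lambda^n+\gamma^n=0$. For the equalities in Statement~3 with $x$ diagonalizable and invertible I would decompose the full space as $\bigoplus_{\mu}E_\mu$ over the nonzero eigenvalues of $x$; $y$ permutes the $E_\mu$ through $\mu\mapsto q\mu$, sorting them into $q$-orbits of size $n$, and the construction above shows $H$ is block-diagonal with one block $\mathbb 1_{d_O}\otimes M_O$ per orbit $O$, where $M_O=\mu_O\Omega+\gamma C$ and $d_O=\mu_g(\lambda,x)$ when $O$ is the orbit of $\lambda$ (the dimensions are constant along an orbit because $y$ gives isomorphisms). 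Since eigenvalues $\beta$ of $M_O$ satisfy $\beta^{n}=\mu_O^{n}+\gamma^{n}$ and $\mu_O\neq 0$, a prescribed nonzero value $\omega_k\sqrt[n]{\lambda^n+\gamma^n}$ — or the value $0$ in the case $\lambda^n+\gamma^n=0$ — is contributed only by the orbit of $\lambda$; hence the earlier inequalities become equalities, giving $\mu_g(\omega_k\sqrt[n]{\lambda^n+\gamma^n},H)=\mu_g(\lambda,x)$ and, in the exceptional case, $\mu_a(0,H)=n\,\mu_g(\lambda,x)=n\,\mu_g(0,H)$, so $\gamma$ is an $n$-th order exceptional point. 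When $\gamma=0$ one instead notes $H=x$ and $yxy^{-1}=q^{-1}x$, whence $\mu_g(q^{j}\lambda,x)=\mu_g(\lambda,x)$ for all $j$; as any value $\omega_k\sqrt[n]{\lambda^{n}}$ equals $q^{m}\lambda$ for some integer $m$, the equality follows.

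I expect the main obstacle to be bookkeeping rather than a single hard estimate: one must verify with care that the $E_{q^{j}\lambda}$ are genuinely distinct and mutually isomorphic, that $H|_{\mathcal V}$ really is $d$ copies of $M$ in the chosen basis, and — for the equalities — that two distinct $q$-orbits of eigenvalues of $x$ can never feed the same eigenvalue of $H$ (this is exactly where $\beta^n=\mu_O^n+\gamma^n$ and $\mu_O\neq 0$ are used). The only genuinely computational inputs are $M^{n}=(\lambda^n+\gamma^n)\mathbb 1$ (the Freshman's dream), the trace identities $\mathrm{tr}(M^{j})=0$ for $1\le j\le n-1$ — or, if one prefers, the restricted similarity \eqref{eqn:q-commute-H} — and the single surviving coefficient $\gamma^{n-1}$ in $M^{n-1}e_0$ that pins down the Jordan type in the exceptional case.
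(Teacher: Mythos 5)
Your proposal is correct, but it is organized quite differently from the paper's proof. The paper argues eigenvalue by eigenvalue: it builds an explicit injective intertwiner $T_{k,\lambda}=\sum_{j\in n}\alpha_{j;k}\,y^{j}$ onto eigenvectors of $H$ (closing the recursion for $\alpha_{j;k}$ with Gauss' $q$-binomial product identity), exhibits explicit Jordan chains $w(k)_j=H^{j}v_k$ of length $n$ via \cref{thm:noncommuting-binomial} for the exceptional case, and then upgrades the inequalities to equalities by a global counting argument (fundamental theorem of algebra plus diagonalizability of $x$). You instead restrict $H$ to the $q$-orbit subspace $\mathcal{V}=\bigoplus_{j}E_{q^{j}\lambda}$, identify $H|_{\mathcal V}$ with $\mathbb{1}_{d}\otimes M$ for the clock-and-shift model $M=\lambda\Omega+\gamma C$, and read off all multiplicities from the exactly solvable $n\times n$ block: $n$ simple eigenvalues when $\lambda^{n}+\gamma^{n}\neq 0$ (via the trace/DFT identities or the spectral $q$-symmetry of \cref{eqn:q-commute-H}), and a single Jordan block when $\lambda^{n}+\gamma^{n}=0$ (pinned by $M^{n}=0$, \cref{eqn:Freshman-Dream}, and the surviving $\gamma^{n-1}$ coefficient in $M^{n-1}e_0$). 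Your route to statement~3 is also different: rather than counting dimensions, you decompose the whole space into $q$-orbits of eigenvalues of $x$ and observe that distinct orbits feed disjoint eigenvalues of $H$ because $\beta^{n}=\mu_O^{n}+\gamma^{n}$ separates them — this is a valid "locality" argument and it simultaneously delivers the exact Jordan structure of $H$ on each block, which the paper's counting argument only infers indirectly. What the paper's construction buys in exchange is explicitness (the eigenvectors and Jordan chains are written down, and the coefficients $\alpha_{j;k}$ are essentially the components of your eigenvectors of $M$ in the basis $\{y^{j}v\}$) and the fact that the inequality parts transfer verbatim to bounded operators on infinite-dimensional spaces, as noted in \cref{app:generalization}; your tensor-product identification also survives there with minor rephrasing, while both approaches genuinely need finite dimension (diagonalizability of $x$ over the whole space) for the equalities. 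The only points that deserve the care you already flag are that $y^{j}$ maps $E_\lambda$ \emph{onto} $E_{q^{j}\lambda}$ (use $y^{n}=\mathbb{1}$ to run the isomorphism both ways, so the orbit dimensions are constant), and that the similarity argument for $\mathrm{sp}(M)=q\,\mathrm{sp}(M)$ needs $\gamma\neq 0$ for $C$-invertibility to be relevant — but the $\gamma=0$ case is trivially diagonal and the trace argument covers all cases, so there is no gap.
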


The condition \(y^{n}=\mathbb{1}\) may be slightly relaxed. If \(y^{n}=\alpha \mathbb{1} \) with \( \alpha \in \mathbb{C} \setminus \{0\}\), then setting 
\(y'=\alpha^{-1/n} y\) defines a matrix that $q$-commutes with $x$ and satisfies 
\((y')^{n}=\mathbb{1}\). 
Applying Theorem~\ref{thm:multiplicity} to \(x\), \(y'\), and a reparametrized Hamiltonian, $H = x + \gamma' y'$ with \(\gamma'=\alpha^{1/n}\gamma\), therefore recovers the essential
conclusions for the original pair \((x,y)\).

\section{Conclusion}
In summary, we have developed a general framework for realizing $\mathcal{PT}$-like phase transitions in non-Hermitian systems without requiring explicit $\mathcal{PT}$-symmetry. By constructing Hamiltonians as nontrivial square roots of energy-shifted supersymmetric partners, we established a systematic route to generate bipartite dynamics with balanced gain and loss, as well as non-reciprocal couplings. The resulting models exhibit second-order exceptional points across which the spectrum undergoes a transition from real to complex eigenvalues. Importantly, similar to standard $\mathcal{PT}$-symmetric systems, the transition between these two spectral regimes is controlled by a single gain/loss parameter. Our approach unifies and generalizes several well-known non-Hermitian models, including the HN and cSSH lattices. Furthermore, the extension to $q$-commuting matrices demonstrates how higher-order exceptional points can be engineered. These results reveal deep connections between non-Hermitian physics and supersymmetry and provide a versatile platform for designing photonic structures with tunable and controllable spectral behavior. Importantly, this approach can be used to engineer non-$\mathcal{PT}$-symmetric Hamiltonians that exhibit real-to-complex eigenvalue transitions and can be implemented with existing photonic and electronic technologies. This capability may open new avenues for designing devices that go beyond what is achievable with conventional $\mathcal{PT}$-symmetric systems.




\section{Acknowledgments}

J.B. is grateful to Aron C. Wall, who showed him an example of a Hamiltonian in the class defined by \cref{eqn:HNH-def} and a corollary of \cref{eqn:indefinite-ev} in the context of his work on holography \cite{AraujoRegado2023}.\\
J.B. is supported by the Basque government through the BERC 2022-2025 program. J. B. is also supported by the Grant PID2024-156184NB-I00 funded by MICIU/AEI/10.13039/501100011033 and cofunded by the European Union.

R.E. acknowledges support from the AFOSR Multidisciplinary University Research Initiative Award on Programmable Systems with Non-Hermitian Quantum Dynamics (Grant No.FA9550-21-1-0202).

\appendix
\section{\texorpdfstring{$2 \times 2$ Matrices}{2x2 Matrices}} \label{app:2x2}
This appendix presents a proof of \cref{thm:2x2}.
\begin{proof}
We start by introducing some notation.
\begin{itemize}
\item Given a complex matrix, $H$, we denote its Hermitian and anti-Hermitian parts by $H_+$ and $H_-$, respectively. Explicitly, 
\begin{align}
    H_+ := \frac{H + H^\dag}{2}, &\quad& 
    H_- := \frac{H - H^\dag}{2 i} .
\end{align}
\item We let $\mathfrak{sl}_2(\mathbb{C})$ denote the complex vector space whose elements are traceless $2 \times 2$ matrices. 
The \textit{Pauli basis} of $\mathfrak{sl}_2(\mathbb{C})$ is a basis comprised of three elements \cite{Pauli1927},
\begin{align}
    \vec{\sigma}_1 &:= \begin{pmatrix}
        0 & 1 \\
        1 & 0 
    \end{pmatrix}, \,\, 
    \vec{\sigma}_2 := \begin{pmatrix}
        0 & -i \\
        i & 0 
    \end{pmatrix} \,\, 
    \vec{\sigma}_3 := \begin{pmatrix}
        1 & 0 \\
        0 & -1
    \end{pmatrix}.
\end{align}
If $(\vec{\alpha}_1, \vec{\alpha}_2, \vec{\alpha}_3)$ is a 3-tuple of real numbers, its corresponding \textit{Pauli vector} is the matrix $\vec{\alpha} \cdot \vec{\sigma} = \sum_{i = 1}^3 \vec{\alpha}_i \vec{\sigma}_i$. 
\item For every $H \in \mathfrak{sl}_2(\mathbb{C})$, we define a pair of real 3-tuples, $\vec{\alpha}_{\pm}(H)$, by 
$\vec{\alpha}_{\pm}(H)_i := \frac{1}{2} \text{Tr}(H_{\pm} \vec{\sigma}_i).$ The Hermitian and anti-Hermitian parts of $H$ are $H_{\pm} = \vec{\alpha}_{\pm}(H) \cdot \vec{\sigma}$, respectively.
\end{itemize}

Using a product identity for Pauli vectors,
\begin{align}
    (\vec{\alpha}\cdot \vec{\sigma} )(\vec{\beta} \cdot \vec{\sigma}) = \vec{\alpha}\cdot \vec{\beta} + i (\vec{\alpha} \times \vec{\beta}) \cdot \vec{\sigma},
\end{align}
the anti-commutator of $H_+$ and $H_-$ evaluates to
\begin{align}
   H_+ H_- + H_- H_+ = 2 \vec{\alpha}_+(H) \cdot \vec{\alpha}_-(H),
\end{align}
implying that the Hermitian and anti-Hermitian parts of $H$ anti-commute if and only if the vectors $\vec{\alpha}_{\pm}(H)$ are orthogonal. Thus, \cref{thm:2x2} is logically equivalent to the following statement: a $2 \times 2$ complex matrix, $H$, is pseudo-Hermitian if and only if $\vec{\alpha}_{+}(H) \cdot \vec{\alpha}_-(H) = 0$. This equivalent statement is known to be true; we summarize one proof for completeness, following \cite[\S 2.7]{Barnett2023PhD}.

Suppose $H \in \mathfrak{sl}_2(\mathbb{C})$ is pseudo-Hermitian. Then, the characteristic polynomial of $H$ has real coefficients \cite{Bender2010}. Using the well-known formula for the determinant of a sum of $2 \times 2$ matrices, $A, B \in \mathbb{C}^{2 \times 2}$, \cite[p. 56]{PopFurdui_SquareMatrices2}
\begin{align}
    \det(A+B) &= \det(A) + \det(B) \nonumber \\
    &+ \text{Tr}(A) \text{Tr}(B) - \text{Tr}(AB),
\end{align}
we can express the characteristic polynomial of $H$ as
\begin{align}
    \det(\lambda - H) = \lambda^2 &- ||\vec{\alpha}_{+}(H)||^2 + ||\vec{\alpha}_-(H)||^2 \nonumber \\ &- 2i \vec{\alpha}_+(H) \cdot \vec{\alpha}_-(H).
\end{align}
Thus, $H$ is pseudo-Hermitian only if $\vec{\alpha}_+(H) \cdot \vec{\alpha}_-(H) = 0$.

Conversely, assume that $\vec{\alpha}_+(H) \cdot \vec{\alpha}_-(H) = 0$. If $\vec{\alpha}_+(H) = 0$, then $H$ is anti-Hermitian and, thus, pseudo-Hermitian. If $\vec{\alpha}_+(H) \neq 0$, then $\vec{\alpha}_+(H)\cdot \vec{\sigma}$ is a Hermitian and invertible metric operator rendering $H$ pseudo-Hermitian. 
\end{proof}




\section{Generalizations} \label{app:generalization}
In the main text, we restricted our discussion to finite-dimensional complex matrices for clarity of exposition. Here, we outline how the principal results extend to more general algebraic frameworks.

\begin{itemize}
\item When the vertex set $V$ has countably-infinite cardinality, the space $\mathbb{C}^V$ is too large to be a Hilbert space. For the most part, the discussion above carries through when the spaces $\mathbb{C}^V$ are replaced by the subset of square summable sequences, 
\begin{align}
    l_2(V) := \{\psi \in \mathbb{C}^V \,|\, \sum_{v \in V} |\psi_v|^2 < +\infty \},
\end{align} which is a Hilbert space. In this case, to avoid potential complications with operator domains, we assume that the maps $T_\pm$ are bounded. This requires all graphs in consideration to have a bounded vertex degree, 
\begin{align}
    \sup_{v \in V}|\{u \in V \,|\, (v, u) \in E \text{ or } (u, v) \in E\}| \in \mathbb{N}.
\end{align}
We note that a finite operating regime for the parameter $\gamma$ where $H_{\text{NH}}$ has a real spectrum only exists if $T$ is bounded below. 

\item Theorem~\ref{thm:Quasi-Hermitian} and \cref{eqn:indefinite-ev} can be abstracted to the setting of $C^*$-algebras, as explained in Appendix~\ref{app:C*-thm-proofs}.
\item The discussion in \cref{sec:q-deform} generalizes straightforwardly to the case where $x$ and $y$ are elements of a Banach algebra.
\item In \cref{thm:multiplicity}, the inequalities of \cref{eqn:multiplicity,eqn:alg-multiplicity} apply to the case where $x$ and $y$ are bounded operators acting on an infinite-dimensional Hilbert space. The proof provided in \cref{app:multiplicity-proof} for \cref{item:3} requires this Hilbert space to be finite-dimensional. 


\end{itemize}

\section{\texorpdfstring{Proof of Theorem~\ref{thm:Quasi-Hermitian}}{Proof of Theorem 2}
} \label{app:C*-thm-proofs}

We will prove \cref{thm:Quasi-Hermitian} in an abstracted setting that is more general than the version stated in the main text. Namely, we consider \textit{unital $C^*$-algebras}, which include algebras of finite-dimensional complex matrices as a special case. Notably, there exist $C^*$-algebras that do not admit finite-dimensional matrix representations, such as the algebra of bounded operators acting on an infinite-dimensional Hilbert space. 

To keep the writing here self-contained, we summarize several standard definitions from the theory of $C^*$-algebras. For a more detailed exposition, the reader is invited to review textbooks such as \cite{Doran2018,Bru2023}. 
An \textit{associative algebra}, $\mathfrak{A}$, is a complex vector space equipped with an associative bilinear product, $\cdot:\mathfrak{A} \times \mathfrak{A} \to \mathfrak{A}$, that distributes over addition. Given two elements of an associative algebra, $x, y \in \mathfrak{A}$, we use the abbreviation $x y := \cdot((x, y))$. A \textit{Banach algebra} is an associative algebra, $\mathfrak{A}$, that is also a complete normed space where the \textit{Banach product inequality} holds, namely $||x y|| \leq ||x|| \cdot ||y||$ for all $x, y \in \mathfrak{A}$. A ${}^*$-\textit{algebra} is an associative algebra, $\mathfrak{A}$, equipped with a map, ${}^*:\mathfrak{A} \to \mathfrak{A}$, called the \textit{involution}, which is an antiautomorphism satisfying $(x^*)^* = x$ for all $x \in \mathfrak{A}$. A \textit{unital} algebra, $\mathfrak{A}$, is an associative algebra containing a two-sided identity, which is an element $\mathbb{1} \in \mathfrak{A}$ such that $\mathbb{1} x = x  \mathbb{1} = x$ for all $x \in \mathfrak{A}$. A $C^*$-\textit{algebra} is an associative Banach ${}^*$-algebra where the $C^*$-\textit{identity} holds, namely 
$||x^* x|| = ||x||^2$ for all $x \in \mathfrak{A}.$

An element of a ${}^*$-algebra, $x \in \mathfrak{A}$, is called \textit{self-adjoint} if $x = x^*$. An element of a unital algebra, $a \in \mathfrak{A}$, is called \textit{invertible} if there exists an element, $a^{-1}$, such that $a  a^{-1} = a^{-1} a = \mathbb{1}$. The \textit{spectrum} of an element of a unital associative algebra, $a \in \mathfrak{A}$, is the set \begin{align}
    \text{sp}(a) := \{\lambda \in \mathbb{C} \,|\, \lambda \mathbb{1} - a \text{ is not invertible} \}.
\end{align}
A self-adjoint $C^*$-algebra element is called \textit{positive} if its spectrum does not contain a negative real number.
The quintessential example of a finite-dimensional $C^*$-algebra is the set of $N \times N$ complex matrices, with $N \in \mathbb{N}$. In this example, the norm is the operator norm, the involution is complex-conjugate transposition in the canonical basis, and the spectrum of a matrix is the set of its eigenvalues. 

Let $x, y$ be elements of a unital $C^*$-algebra, $\mathfrak{A}$, such that $x$ is invertible and $y^2 = \mathbb{1}$. Additionally, assume the existence of an invertible and positive element, $\eta \in \mathfrak{A}$, such that $z = \eta^{-1} z^* \eta$ for all $z \in \{x, y\}$. Then, we prove that the assertions of \cref{thm:Quasi-Hermitian} hold for these abstract algebra elements.

\begin{proof}[Proof of \Cref{thm:Quasi-Hermitian}] 
First, we argue that we can take $\eta = \mathbb{1}$ without loss of generality. To understand why, we may regard $x$ and $y$ as self-adjoint elements of a different $C^*$ algebra, $\mathfrak{A}_\eta$, whose elements and associative algebra operations are taken from $\mathfrak{A}$ and whose the involution and norm, ${\cdot}^{*_\eta}$ and $||\cdot||_\eta$, are defined by \cite{Karuvade2022}
\begin{align}
   z^{*_\eta} &:= \eta^{-1} z^* \eta &\quad& \forall z \in \mathfrak{A}_\eta \\
   ||z||_\eta &:= ||\eta^{1/2} z \eta^{-1/2}|| &\quad& \forall z \in \mathfrak{A}_\eta,
\end{align}
where $\eta^{1/2}$ refers to the unique positive square root of $\eta$. Suppose we have shown that the theorem is true for an arbitrary $C^*$-algebra with $\eta = \mathbb{1}$. Then, it must be true for the algebras $\mathfrak{A}_{\eta'}$ for any positive $\eta'$, which implies the theorem statement in $\mathfrak{A}$ if we select $\eta' = \eta$.

We note that identity $\eta_{\text{QH}} H = H^* \eta_{\text{QH}}$ can be deduced from $H = x^{-1} H^* x$ and $\eta_{\text{QH}} = x^{-1} H$. This is a specific instance of the procedure for generating metric operators outlined in \cite{bian2019time}.

The spectrum of $H$ can be characterized by using \cref{eqn:Superset-Of-Hamiltonian-Eigenvalues} and observing its particle-hole  symmetry. Explicitly,
\begin{align}
    \text{sp}(H) = \{ \pm \sqrt{\mu^2 - \gamma^2} \, |\, \mu \in \text{sp}(x)\},
\end{align}
from which the equivalence
$(1) \Leftrightarrow (2)$ becomes readily apparent. 

We now discuss the claim $(2) \Leftrightarrow (3)$. Note that the self-adjoint operator $\eta_{\text{QH}} - \mathbb{1}$ admits a chiral symmetry, since $x (\eta_{\text{QH}} - \mathbb{1}) = (\eta_{\text{QH}}-\mathbb{1}) x$. Thus, $\eta_{\text{QH}}$ is positive if and only if the spectrum of $i \gamma x^{-1} y$ is a subset of the unit disk. We can determine bounds on the spectrum of an arbitrary $C^*$-algebra element, $a$, using their norms \cite[\S 4.2]{Bru2023},
\begin{align}
||a|| = \sup\{|\lambda|\,| \lambda \in \text{sp}(a) \}. \label{eqn:norm-sp-relation}
\end{align} 
In this case, the norm of $i \gamma x^{-1} y$ can be related to the norm of $x^{-1}$ by using the $C^*$-identity,
\begin{align}
    ||x^{-1} y|| &= \sqrt{||(x^{-1} y)^* ( x^{-1} y)||}\\
    &= \sqrt{||x^{-1} y y x^{-1}||} \\
    &= \sqrt{||x^{-2}||} \\
    &\Downarrow \nonumber \\
    ||y x^{-1}|| &= ||x^{-1}||, \label{eqn:Norm-of-x-inverse}
\end{align}
thereby implying that $\eta_{\text{QH}}$ is positive if and only if $|\gamma|  \leq ||x^{-1}||^{-1}$. We can recast this equivalence as $(2) \Leftrightarrow (3)$ by noting
\begin{align}
    ||a^{-1}||^{-1} = \inf(|\text{sp}(a)|),\label{eqn:spectral-inf}
\end{align} 
which follows from the spectral mapping theorem, and $\inf(|\text{sp}(a)|) = \min(|\text{sp}(a)|)$, which follows from the compactness of the spectrum.
\end{proof}

We conclude this appendix by generalizing \cref{eqn:indefinite-ev} to $C^*$-algebraic settings. This requires an abstract definition of an eigenstate; here, we use and summarize the definition of \cite{DeNittis2023}. In a $C^*$-algebra, $\mathfrak{A}$, states are defined as suitable elements of the continuous dual space of $\mathfrak{A}$, which is the set of all bounded linear maps from $\mathfrak{A}$ to $\mathbb{C}$ and is denoted by $\mathcal{B}(\mathfrak{A},\mathbb{C})$. A \textit{state} on $\mathfrak{A}$ is a positive normalized linear functional. Explicitly, the set of states on $\mathfrak{A}$ is 
\begin{align}
    S(\mathfrak{A}) := \bigcap_{x \in \mathfrak{A}} \left\{\varphi \in \mathcal{B}(\mathfrak{A}, \mathbb{C}) \,|\, \varphi(x^* x) \geq 0 \wedge ||\varphi|| = 1 \right\}.
\end{align}
The set of \textit{eigenstates} of a $C^*$-algebra element, $a \in \mathfrak{A}$, is the set of states that satisfy an algebraic version of the eigenstate condition,
\begin{align}
    \text{Eig}_\lambda(a) := \bigcap_{x \in \mathfrak{A}} \left\{ \omega \in S(\mathfrak{A})\,|\, \omega(x a) = \lambda \omega(x) \right\}.
\end{align}
If $\text{Eig}_\lambda(a) \neq \{0\}$, then $\lambda$ is referred to as an \textit{eigenvalue} of $a$. 

Let $\rho \in \text{Eig}_\lambda(x + i \gamma y)$, where $x$ and $y$ are anti-commuting pseudo-Hermitian elements, meaning there exists an invertible $\eta = \eta^* \in \mathfrak{A}$ such that $\eta x = x^* \eta$ and $\eta y = y^* \eta$. Then, 
\begin{align}
    \text{Re}(\lambda) \rho(\eta) = \rho(\eta x) &\quad& \text{Im}(\lambda) \rho(\eta) = \gamma \rho(\eta y)
\end{align}
is a $C^*$-algebraic generalization of \cref{eqn:indefinite-ev}.

\section{\texorpdfstring{Proof of \cref{thm:tree}}{Proof of theorem 3}} \label{app:tree-proof}
In the following two appendices, we define the natural numbers as sets in Zermelo–Fraenkel set theory, where a number, $n \in \mathbb{N}$, is identified with the set of all lesser numbers, $n = \{m \in \mathbb{N} \,|\,m < n\}$.

\begin{proof}
Suppose $\eta \in \mathbb{C}^{V \times V}$ is a diagonal metric operator for $T$ with $\eta_{rr} = 1$. Then, the definition of pseudo-Hermiticity, \cref{eqn:pseudo-defn}, implies
\begin{align}
    \eta_{uu} T_{uv} = T^*_{vu} \eta_{vv} &\quad& \forall u, v \in V.\label{eqn:diagonal-recurrence}
\end{align} 
This family of equations admits a unique solution that can be found by considering them as recurrence relations along paths in the tree.
Since $G$ is a connected tree, given $u \in V$, there exists a unique simple path, $P_{u}$, that walks from $r$ to $u$. Explicitly, $P_{u}$ is defined by the following properties:
    \begin{itemize}
        \item $P_{u}$ is a \textit{directed path}. This means there exists a bijective enumeration of its vertices, $v_{i}$ where $i \in d$ for some $d \in \mathbb{N}$, such that $e$ is an edge of $P_{uv}$ if and only if $e = (v_i, v_{i+1})$ for some $i \in d$.
        \item Both $u$ and $r$ belong to exactly one edge of $P_{u}$. The edge containing the root is of the form $(r,v)$ for some $v \in V$.
    \end{itemize}
     Let $\delta_{uv}$ denote the Kronecker delta. Then, the recurrence relation \cref{eqn:diagonal-recurrence} implies that the matrix elements of $\eta$ are
    \begin{align}
        \eta_{uv} := \delta_{uv} \prod_{(x, y) \in P_u} \frac{T_{yx}^*}{T_{xy}}. \label{eqn:dipath-metric}
    \end{align} 
    Positive-definiteness of $\eta$ follows from the constraint $T_{uv} T_{vu} > 0$ for all $(u, v) \in E$.
\end{proof}

\section{\texorpdfstring{Proof of \cref{thm:multiplicity}}{Proof of theorem 4}} \label{app:multiplicity-proof}

Before proceeding with the proof, we summarize some elementary aspects of linear algebra.
The \textit{eigenspace} of an operator, $A$, corresponding to the eigenvalue $\lambda$ is the set $\ker (\lambda \mathbb{1} - A)$, where $\ker$ denotes the kernel of an operator. The multiplicities of the operator's eigenvalues are explicitly defined by
\begin{align}
    \mu_g(\lambda, A) &:= \dim \ker(\lambda \mathbb{1} - A),\\
    \mu_a(\lambda, A) &:= \sup_{N \in \mathbb{N}} \dim \ker (\lambda \mathbb{1} - A)^N. \label{eqn:alg-mult-def}
\end{align}
A \textit{Jordan chain} with length $d \in \mathbb{N}$ for the eigenvalue $\lambda$ of $A$ is a sequence of nonzero vectors, $w:d \to \text{Dom}(A^d) \setminus \{0\}$, such that 
\begin{align}
    (A - \lambda) w_j &= w_{j+1} &\quad& \forall j \in d-1 \label{eqn:Jordan1} \\
    (A - \lambda) w_{d-1} &= 0. \label{eqn:Jordan2}
\end{align}
If $w(k)$ is an indexed family of Jordan chains with lengths $d_k$ such that their eigenvectors, $w(k)_{d_k-1}$, are linearly independent, then the union of all vectors in these chains is linearly independent and, consequently, 
\begin{align}
    \mu_a(\lambda, A) \geq \sum_k d_k. \label{eqn:chain-sum}
\end{align}

We start by proving the following lemma.
\begin{lemma} \label{lemma:Linear-Independence}
    Let $x$ and $y$ be matrices such that $x y = q y x$ for some $q \in \mathbb{C}$ and let $\lambda \in \mathbb{C} \setminus \{0\}$. Assume there exists a positive integer, $n \in \mathbb{N} \setminus \{0\}$, such that for all $j \in n$, $q^j = 1 \, \Leftrightarrow j = 0$ and $\ker(y^j) \cap \ker(\lambda \mathbb{1} - x) = \{0\}$. Let $m \in \mathbb{N}$ and let $v_{k \in m}$ be a sequence of $m$ linearly-independent eigenvectors of $x$ with eigenvalue $\lambda$. Then, $\{y^j v_k \,|\, k \in m, j \in n\}$ is linearly independent.
\end{lemma}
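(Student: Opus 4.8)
The plan is to exploit the fact that multiplication by $y$ shifts the eigenspaces of $x$ in a controlled fashion. The first step would be to record the elementary consequence of the $q$-commutation relation: if $x v = \lambda v$, then $x(yv) = (xy)v = q\, y (xv) = q\lambda\, (yv)$, so a short induction on $j$ gives $x(y^j v) = q^j \lambda\, y^j v$ for all $j \in \mathbb{N}$. Equivalently, $y^j$ carries the eigenspace $\ker(\lambda\mathbb{1} - x)$ into the eigenspace $\ker(q^j\lambda\,\mathbb{1} - x)$.

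Next I would establish two facts using the hypothesis $\ker(y^j) \cap \ker(\lambda\mathbb{1} - x) = \{0\}$, which is assumed for every $j \in n$. First, each vector $y^j v_k$ is nonzero, since $v_k \in \ker(\lambda\mathbb{1} - x) \setminus \{0\}$ cannot lie in $\ker(y^j)$. Second, for each fixed $j \in n$ the family $\{y^j v_k \,|\, k \in m\}$ is linearly independent: a relation $\sum_k c_k\, y^j v_k = y^j\bigl(\sum_k c_k v_k\bigr) = 0$ puts $\sum_k c_k v_k$ into $\ker(y^j)$, and since it also lies in $\ker(\lambda\mathbb{1} - x)$ it must vanish; linear independence of the $v_k$ then forces all $c_k = 0$. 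In short, $y^j$ restricts to an injection on the $\lambda$-eigenspace of $x$.

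The final step is to glue these $n$ linearly independent families together. For this I would note that the eigenvalues $q^j \lambda$, $j \in n$, are pairwise distinct: $q^i\lambda = q^j\lambda$ with $0 \le j < i \le n-1$ would give $q^{i-j} = 1$ with $i - j \in \{1,\dots,n-1\}$, contradicting the hypothesis that $q^\ell = 1 \Leftrightarrow \ell = 0$ for $\ell \in n$ — and it is precisely here that $\lambda \neq 0$ is used, to cancel $\lambda$. Since eigenvectors of $x$ attached to distinct eigenvalues are linearly independent, the eigenspaces $\ker(q^j\lambda\,\mathbb{1} - x)$, $j \in n$, are in direct sum, and a union of linearly independent subsets of distinct summands of a direct sum is itself linearly independent. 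Applying this to the families from the previous step yields that $\{y^j v_k \,|\, k \in m,\ j \in n\}$ is linearly independent, as claimed.

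The argument is short and I do not anticipate a genuine obstacle; the points that need care are the essential use of $\lambda \neq 0$ (without it the images $q^j\lambda$ all coincide and the direct-sum argument collapses) and the fact that the kernel hypothesis must be invoked for every $j$ in the range, both to rule out vanishing and to obtain the injectivity of $y^j$ on the $\lambda$-eigenspace.
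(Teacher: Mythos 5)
Your proof is correct and follows essentially the same route as the paper's: both arguments rest on $y^j$ mapping the $\lambda$-eigenspace of $x$ into the $q^j\lambda$-eigenspace, the pairwise distinctness of the $q^j\lambda$ (which forces any dependence relation to split by eigenvalue), and the hypothesis $\ker(y^j)\cap\ker(\lambda\mathbb{1}-x)=\{0\}$ combined with independence of the $v_k$ to kill each eigenvalue block. The only difference is organizational—you phrase the last step as injectivity of $y^j$ on the eigenspace plus a direct-sum argument, while the paper manipulates an assumed vanishing linear combination directly—so no substantive comment is needed.
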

\begin{proof}
Suppose $c \in \mathbb{C}^{n \times m}$ is such that 
    \begin{align}
        \sum_{j \in n, k \in m} c_{jk} y^j v_k = 0.
    \end{align}
Then, the set $\{y^j v_k \,|\, k \in m, j \in n\}$ is linearly independent if $c = 0$ necessarily holds. To understand why this is the case, let 
\begin{align}
    \tilde{v}_j := y^j \sum_{k \in m} c_{jk} v_k. \label{eqn:vTilde-defn}
\end{align}
Then, since $y$ $q$-commutes with $x$, it follows that $\tilde{v}_j$ is an element of the eigenspace of $x$ with eigenvalue $q^j \lambda$, or more explicitly, 
\begin{align}
x \tilde{v}_j = q^j \lambda \tilde{v}_j.    
\end{align}
Because a set of nonzero eigenvectors corresponding to distinct eigenvalues is linearly independent, the constraint $\sum_{j \in n} \tilde{v}_j = 0$ implies $\tilde{v}_j = 0$ for all $j \in n$. Thus, by \cref{eqn:vTilde-defn}, we find 
\begin{align}
    \sum_{k \in m} c_{jk} v_k &\in \ker(y^j) \cap \ker(\lambda \mathbb{1} - x)\\
    &\Updownarrow \nonumber \\
    \sum_{k \in m} c_{jk} v_k &= 0.
\end{align}
Since $v_k$ was assumed to be linearly independent, it must be the case that $c_{jk} = 0$ for all $j\in n$ and $k\in m$. 
\end{proof}

We proceed by writing a proof of \cref{thm:multiplicity}.

\begin{proof}[Proof of \Cref{thm:multiplicity}]
We start by addressing the case of $\gamma = 0$, where $H = x$. Since $y^{-1} x y = q x$, $v \in \ker(\lambda \mathbb{1} - x)^N$ if and only if $y v \in \ker(q \lambda \mathbb{1} - x)^N$ for all $N \in \mathbb{N}$. Thus, the algebraic and geometric multiplicities of $\lambda$ in $x$ coincide with the algebraic and geometric multiplicities of $q^r \lambda$ in $x$ for all $r \in \mathbb{N}$, respectively. Since there exists a value of $r$ such that $q^r = \omega_k$, the algebraic and geometric multiplicities in $x$ of $\lambda$ and $\omega_k \lambda$ are equal for all $\lambda \in \mathbb{C}$, thereby proving the theorem for $\gamma = 0$. Thus, moving forward, we only need to consider the case $\gamma \neq 0$.

We prove the assertions of the theorem in order.
\begin{enumerate}
    \item We proceed by systematically constructing eigenvectors of $H$ corresponding to eigenvectors of $x$. To be precise, we find a one-to-one linear map, $T_{k, \lambda}$, that maps the eigenspace of $x$ with eigenvalue $\lambda$ to the eigenspace of $H$ with eigenvalue $\omega_k \sqrt[n]{\lambda^n + \gamma^n}$. Since the geometric multiplicity of an eigenvalue is the dimension of its eigenspace and since $\dim V \leq \dim W$ whenever $V$ and $W$ are vector spaces such that there exists a one-to-one linear map from $V$ to $W$, the inequality of \cref{eqn:multiplicity} immediately follows from the existence of $T_{k,\lambda}$.
    
    Let $T_{k, \lambda}: \ker(\lambda \mathbb{1} - x) \to \text{Dom}(x)$, be defined by
    \begin{align}
        T_{k, \lambda} v := \sum_{j \in n} \alpha_{j;k} y^j v,
    \end{align}
    where the sequence $\alpha: n\times n \to \mathbb{C}$ is defined by
    \begin{align}
        \alpha_{0;k} &:= 1\\
        \alpha_{j+1;k} &:= \dfrac{\gamma \alpha_{j;k}}{-\lambda q^{j+1} + \omega_k \sqrt[n]{\lambda^n + \gamma^n}}, \label{defn:alpha}
    \end{align}
    with $k \in n$ and $j \in n-1$. Note the denominator in the construction for $\alpha$ is nonzero because we assumed $\gamma \neq 0$. 
    
    We now prove that $T_{k, \lambda}$ is one-to-one by contradiction. A linear map is one-to-one if and only if its kernel is the singleton set whose only element is the zero vector. Suppose there exists a nonzero vector, $v$, in the kernel of $T_{k, \lambda}$. By definition of $T_{k, \lambda}$, this would imply that the set $\{y^j v\,|\, j \in n\}$ is linearly dependent, which contradicts lemma~\ref{lemma:Linear-Independence}. 

Next, we show that elements of the image of $T_{k, \lambda}$ are eigenvectors of $H$. Note 
    \begin{align}
        H T_{k, \lambda} v &= \sum_{j \in n} \alpha_{j;k} (x + \gamma y) y^j v\\
        &= \sum_{j \in n} \alpha_{j;k}  (\lambda q^j  y^j + \gamma y^{j+1}) v \\
        &= \sum_{j=1}^{n-1} \left(q^j \lambda \alpha_{j;k} + \gamma \alpha_{j-1;k} \right) y^j v \nonumber \\
        &+ (\lambda + \gamma \alpha_{n-1;k}) v. \label{eqn:HTv}
    \end{align}
We proceed by producing a closed-form expression for $\alpha_{n-1;k}$. First, note 
\begin{align}
    \alpha_{n-1;k}
    &= \gamma^{n-1} \left(-\lambda + \omega_k \sqrt[n]{\lambda^n + \gamma^n}\right) \nonumber \\
    &\times\prod\limits_{j = 0}^{n-1} (-\lambda q^j + \omega_k \sqrt[n]{\lambda^n + \gamma^n})^{-1} \label{eqn:alpha-prod}.
\end{align}
This product can be re-expressed as a sum using Gauss' $q$-\textit{binomial theorem}, which states for every $a, b \in \mathbb{C}$ \cite{Gauß1808,kac2002quantum},
\begin{align}
\prod_{k \in m} (a + q^k b) = \sum_{j = 0}^n q^{j(j-1)/2} \begin{bmatrix}
        m \\
        j
    \end{bmatrix}_q
    a^{n-j} b^j.
\end{align}
In summary,
\begin{align}
    \alpha_{n-1;k} &= \gamma^{-1} \left(-\lambda + \omega_k \sqrt[n]{\lambda^n + \gamma^n}\right).
    \label{eqn:alpha-last}
\end{align}
Equations~\ref{defn:alpha} and \ref{eqn:alpha-last} can be used to rewrite \cref{eqn:HTv} as
\begin{align}
    H\, T_{k, \lambda} v = \omega_k \sqrt[n]{\lambda^n + \gamma^n}\, T_{k, \lambda} v,
\end{align}
which means that $T_{k, \lambda} v$ resides in the eigenspace of $H$ with the eigenvalue $\omega_k \sqrt[n]{\lambda^n + \gamma^n}$.

\item Next, let $\text{sp}_p(x)$ denote the set of eigenvalues of $x$ and consider the following subset
\begin{align}
    E = \{z \in \text{sp}_p(x) \,|\, z^n + \gamma^n = 0\}.
\end{align}
Intuitively, the eigenvalues of $x$ in the set $E$ are those that merge when $\gamma$ is an exceptional point of $H$. 

Suppose $\lambda \in E$ and let $\{v_k \,|\, k \in \mu_g(\lambda, x)\}$ be a basis of the eigenspace of $x$ with eigenvalue $\lambda$. Then, in the remainder of this paragraph, we show that the sequences 
\begin{align}
    w(k)_j := H^j v_k&\quad& (j \in n)
\end{align}
are linearly-independent Jordan chains for the eigenvalue $0$ in $H$ with length $n$. To do this, we need to verify three properties: that $w(k)_j$ is nonzero, satisfies \cref{eqn:Jordan1}, and satisfies \cref{eqn:Jordan2}. Equation~\ref{eqn:Jordan1} holds by definition. The other two properties are equivalent to the assertion that if $m \in \mathbb{N}$, then $H^m v_k = 0 \, \Leftrightarrow \, m \geq n$. This assertion follows from the noncommutative binomial theorem of \cref{thm:noncommuting-binomial} \cite{schutzenberger1953interpretation}, which implies
\begin{align}
    H^m v_k = \sum_{j = 0}^m \gamma^{m-j} \lambda^j \begin{bmatrix}
        m \\
        j
    \end{bmatrix}_q y^{m-j} v_k.
\end{align}
When $m < n$, $H^m v_k \neq 0$ follows from the previously-established linear independence of the set $\{y^j v \,|\, j \in n\}$. The identity $H^n v_k = 0$ follows from the Freshman's dream, \cref{eqn:Freshman-Dream}. Finally, we remark that the linear independence of the eigenvectors $H^{n-1} v_k$ is a corollary of lemma~\ref{lemma:Linear-Independence}.

The inequality of \cref{eqn:alg-multiplicity} follows by applying \cref{eqn:chain-sum} to the set of $\mu_g(\lambda, x)$ linearly-independent Jordan chains with length $n$, $w(k)$, constructed in the preceding paragraph.

\item Below, we assume that $x$ is diagonalizable, invertible, and $\text{Dom}(x)$ is the finite-dimensional space $\mathbb{C}^N$ with $N \in \mathbb{N}$. The diagonalizability of $x$ is equivalent to asserting 
\begin{align}
N = \sum_{\lambda \in \text{sp}(x)} \mu_g(\lambda, x). \label{eqn:contradiction-in-mult}
\end{align}
By the fundamental theorem of algebra, the sum of algebraic multiplicities of $x$ is $N$, implying 
\begin{align}
    N
    &= \mu_a(0,H) + \sum_{\mu \in \text{sp}(H) \setminus \{0\}} \mu_a(\mu, H). \label{eqn:mult-sum}
\end{align}
Let us consider the two terms in this sum separately. We first note that
\begin{align}
    \mu_a(0,H) \geq  \sum_{\lambda\in E} \mu_g(\lambda, x)
\end{align}
is equivalent to the inequality of \cref{eqn:alg-multiplicity} that has already been established. By placing the nonzero eigenvalues of $H$ in one-to-one correspondence with the elements of $\text{sp}(x) \setminus E$ via \cref{eqn:Superset-Of-Hamiltonian-Eigenvalues} and using the inequality of \cref{eqn:multiplicity}, we find
\begin{align}
   \sum_{\mu \in \text{sp}(H) \setminus \{0\}} \mu_g(\mu, H) 
    &\geq \sum_{\lambda \in \text{sp}(x) \setminus E} \mu_g(\lambda, x).
\end{align}
Suppose either of the preceding inequalities is a strict inequality. Then, inserting this inequality into the right-hand side of \cref{eqn:mult-sum} results in the strict inequality
\begin{align}
    N &> \sum_{\lambda \in E} \mu_g(\lambda, x) + \sum_{\lambda \in \text{sp}(x) \setminus E} \mu_g(\lambda, x) \Leftrightarrow \nonumber \\ 
    N &> \sum_{\lambda \in \text{sp}(x)} \mu_g(\lambda, x),
\end{align}
which would contradict \cref{eqn:contradiction-in-mult}. Thus, the inequalities must be strict equalities, completing the theorem proof.
\end{enumerate}
\end{proof}


\bibliography{bib}

\begin{thebibliography}{77}%
\makeatletter
\providecommand \@ifxundefined [1]{%
 \@ifx{#1\undefined}
}%
\providecommand \@ifnum [1]{%
 \ifnum #1\expandafter \@firstoftwo
 \else \expandafter \@secondoftwo
 \fi
}%
\providecommand \@ifx [1]{%
 \ifx #1\expandafter \@firstoftwo
 \else \expandafter \@secondoftwo
 \fi
}%
\providecommand \natexlab [1]{#1}%
\providecommand \enquote  [1]{``#1''}%
\providecommand \bibnamefont  [1]{#1}%
\providecommand \bibfnamefont [1]{#1}%
\providecommand \citenamefont [1]{#1}%
\providecommand \href@noop [0]{\@secondoftwo}%
\providecommand \href [0]{\begingroup \@sanitize@url \@href}%
\providecommand \@href[1]{\@@startlink{#1}\@@href}%
\providecommand \@@href[1]{\endgroup#1\@@endlink}%
\providecommand \@sanitize@url [0]{\catcode `\\12\catcode `\$12\catcode `\&12\catcode `\#12\catcode `\^12\catcode `\_12\catcode `\%12\relax}%
\providecommand \@@startlink[1]{}%
\providecommand \@@endlink[0]{}%
\providecommand \url  [0]{\begingroup\@sanitize@url \@url }%
\providecommand \@url [1]{\endgroup\@href {#1}{\urlprefix }}%
\providecommand \urlprefix  [0]{URL }%
\providecommand \Eprint [0]{\href }%
\providecommand \doibase [0]{https://doi.org/}%
\providecommand \selectlanguage [0]{\@gobble}%
\providecommand \bibinfo  [0]{\@secondoftwo}%
\providecommand \bibfield  [0]{\@secondoftwo}%
\providecommand \translation [1]{[#1]}%
\providecommand \BibitemOpen [0]{}%
\providecommand \bibitemStop [0]{}%
\providecommand \bibitemNoStop [0]{.\EOS\space}%
\providecommand \EOS [0]{\spacefactor3000\relax}%
\providecommand \BibitemShut  [1]{\csname bibitem#1\endcsname}%
\let\auto@bib@innerbib\@empty
\bibitem [{\citenamefont {Bender}\ and\ \citenamefont {Boettcher}(1998)}]{Bender1998}%
  \BibitemOpen
  \bibfield  {author} {\bibinfo {author} {\bibfnamefont {C.~M.}\ \bibnamefont {Bender}}\ and\ \bibinfo {author} {\bibfnamefont {S.}~\bibnamefont {Boettcher}},\ }\bibfield  {title} {\bibinfo {title} {Real spectra in non-hermitian hamiltonians having $\mathcal{PT}$ symmetry},\ }\href {https://doi.org/10.1103/PhysRevLett.80.5243} {\bibfield  {journal} {\bibinfo  {journal} {Phys. Rev. Lett.}\ }\textbf {\bibinfo {volume} {80}},\ \bibinfo {pages} {5243} (\bibinfo {year} {1998})},\ \Eprint {https://arxiv.org/abs/physics/9712001} {arXiv:physics/9712001 [math-ph]} \BibitemShut {NoStop}%
\bibitem [{\citenamefont {Bender}\ \emph {et~al.}(1999)\citenamefont {Bender}, \citenamefont {Boettcher},\ and\ \citenamefont {Meisinger}}]{Bender1999}%
  \BibitemOpen
  \bibfield  {author} {\bibinfo {author} {\bibfnamefont {C.~M.}\ \bibnamefont {Bender}}, \bibinfo {author} {\bibfnamefont {S.}~\bibnamefont {Boettcher}},\ and\ \bibinfo {author} {\bibfnamefont {P.~N.}\ \bibnamefont {Meisinger}},\ }\bibfield  {title} {\bibinfo {title} {$\mathcal{PT}$-symmetric quantum mechanics},\ }\href {https://doi.org/10.1063/1.532860} {\bibfield  {journal} {\bibinfo  {journal} {J. Math. Phys.}\ }\textbf {\bibinfo {volume} {40}},\ \bibinfo {pages} {2201–2229} (\bibinfo {year} {1999})},\ \Eprint {https://arxiv.org/abs/quant-ph/9809072} {arXiv:quant-ph/9809072 [quant-ph]} \BibitemShut {NoStop}%
\bibitem [{\citenamefont {Garcia}\ and\ \citenamefont {Putinar}(2005)}]{Garcia2005}%
  \BibitemOpen
  \bibfield  {author} {\bibinfo {author} {\bibfnamefont {S.}~\bibnamefont {Garcia}}\ and\ \bibinfo {author} {\bibfnamefont {M.}~\bibnamefont {Putinar}},\ }\bibfield  {title} {\bibinfo {title} {Complex symmetric operators and applications},\ }\href {https://doi.org/10.1090/s0002-9947-05-03742-6} {\bibfield  {journal} {\bibinfo  {journal} {Trans. Am. Math. Soc.}\ }\textbf {\bibinfo {volume} {358}},\ \bibinfo {pages} {1285–1315} (\bibinfo {year} {2005})}\BibitemShut {NoStop}%
\bibitem [{\citenamefont {Kato}(1995)}]{Kato1995}%
  \BibitemOpen
  \bibfield  {author} {\bibinfo {author} {\bibfnamefont {T.}~\bibnamefont {Kato}},\ }\href {https://doi.org/10.1007/978-3-642-66282-9} {\emph {\bibinfo {title} {Perturbation Theory for Linear Operators}}}\ (\bibinfo  {publisher} {Springer Berlin Heidelberg},\ \bibinfo {year} {1995})\BibitemShut {NoStop}%
\bibitem [{\citenamefont {Heiss}(2000)}]{Heiss2000}%
  \BibitemOpen
  \bibfield  {author} {\bibinfo {author} {\bibfnamefont {W.~D.}\ \bibnamefont {Heiss}},\ }\bibfield  {title} {\bibinfo {title} {Repulsion of resonance states and exceptional points},\ }\href {https://doi.org/10.1103/physreve.61.929} {\bibfield  {journal} {\bibinfo  {journal} {Phys. Rev. E}\ }\textbf {\bibinfo {volume} {61}},\ \bibinfo {pages} {929–932} (\bibinfo {year} {2000})},\ \Eprint {https://arxiv.org/abs/quant-ph/9909047} {arXiv:quant-ph/9909047 [quant-ph]} \BibitemShut {NoStop}%
\bibitem [{\citenamefont {Rotter}(2003)}]{Rotter2003}%
  \BibitemOpen
  \bibfield  {author} {\bibinfo {author} {\bibfnamefont {I.}~\bibnamefont {Rotter}},\ }\bibfield  {title} {\bibinfo {title} {Exceptional points and double poles of the $s$ matrix},\ }\href {https://doi.org/10.1103/physreve.67.026204} {\bibfield  {journal} {\bibinfo  {journal} {Phys. Rev. E}\ }\textbf {\bibinfo {volume} {67}},\ \bibinfo {pages} {026204} (\bibinfo {year} {2003})},\ \Eprint {https://arxiv.org/abs/quant-ph/0211197} {arXiv:quant-ph/0211197 [quant-ph]} \BibitemShut {NoStop}%
\bibitem [{\citenamefont {El-Ganainy}\ \emph {et~al.}(2007)\citenamefont {El-Ganainy}, \citenamefont {Makris}, \citenamefont {Christodoulides},\ and\ \citenamefont {Musslimani}}]{ElGanainy2007}%
  \BibitemOpen
  \bibfield  {author} {\bibinfo {author} {\bibfnamefont {R.}~\bibnamefont {El-Ganainy}}, \bibinfo {author} {\bibfnamefont {K.~G.}\ \bibnamefont {Makris}}, \bibinfo {author} {\bibfnamefont {D.~N.}\ \bibnamefont {Christodoulides}},\ and\ \bibinfo {author} {\bibfnamefont {Z.~H.}\ \bibnamefont {Musslimani}},\ }\bibfield  {title} {\bibinfo {title} {Theory of coupled optical pt-symmetric structures},\ }\href {https://doi.org/10.1364/ol.32.002632} {\bibfield  {journal} {\bibinfo  {journal} {Opt. Lett.}\ }\textbf {\bibinfo {volume} {32}},\ \bibinfo {pages} {2632} (\bibinfo {year} {2007})}\BibitemShut {NoStop}%
\bibitem [{\citenamefont {Makris}\ \emph {et~al.}(2008)\citenamefont {Makris}, \citenamefont {El-Ganainy}, \citenamefont {Christodoulides},\ and\ \citenamefont {Musslimani}}]{Makris2008}%
  \BibitemOpen
  \bibfield  {author} {\bibinfo {author} {\bibfnamefont {K.~G.}\ \bibnamefont {Makris}}, \bibinfo {author} {\bibfnamefont {R.}~\bibnamefont {El-Ganainy}}, \bibinfo {author} {\bibfnamefont {D.~N.}\ \bibnamefont {Christodoulides}},\ and\ \bibinfo {author} {\bibfnamefont {Z.~H.}\ \bibnamefont {Musslimani}},\ }\bibfield  {title} {\bibinfo {title} {Beam dynamics in $\mathcal{PT}$-symmetric optical lattices},\ }\href {https://doi.org/10.1103/physrevlett.100.103904} {\bibfield  {journal} {\bibinfo  {journal} {Phys. Rev. Lett.}\ }\textbf {\bibinfo {volume} {100}},\ \bibinfo {pages} {103904} (\bibinfo {year} {2008})}\BibitemShut {NoStop}%
\bibitem [{\citenamefont {R\"{u}ter}\ \emph {et~al.}(2010)\citenamefont {R\"{u}ter}, \citenamefont {Makris}, \citenamefont {El-Ganainy}, \citenamefont {Christodoulides}, \citenamefont {Segev},\ and\ \citenamefont {Kip}}]{Rter2010}%
  \BibitemOpen
  \bibfield  {author} {\bibinfo {author} {\bibfnamefont {C.~E.}\ \bibnamefont {R\"{u}ter}}, \bibinfo {author} {\bibfnamefont {K.~G.}\ \bibnamefont {Makris}}, \bibinfo {author} {\bibfnamefont {R.}~\bibnamefont {El-Ganainy}}, \bibinfo {author} {\bibfnamefont {D.~N.}\ \bibnamefont {Christodoulides}}, \bibinfo {author} {\bibfnamefont {M.}~\bibnamefont {Segev}},\ and\ \bibinfo {author} {\bibfnamefont {D.}~\bibnamefont {Kip}},\ }\bibfield  {title} {\bibinfo {title} {Observation of parity–time symmetry in optics},\ }\href {https://doi.org/10.1038/nphys1515} {\bibfield  {journal} {\bibinfo  {journal} {Nat. Phys.}\ }\textbf {\bibinfo {volume} {6}},\ \bibinfo {pages} {192–195} (\bibinfo {year} {2010})}\BibitemShut {NoStop}%
\bibitem [{\citenamefont {Feng}\ \emph {et~al.}(2017)\citenamefont {Feng}, \citenamefont {El-Ganainy},\ and\ \citenamefont {Ge}}]{Feng2017}%
  \BibitemOpen
  \bibfield  {author} {\bibinfo {author} {\bibfnamefont {L.}~\bibnamefont {Feng}}, \bibinfo {author} {\bibfnamefont {R.}~\bibnamefont {El-Ganainy}},\ and\ \bibinfo {author} {\bibfnamefont {L.}~\bibnamefont {Ge}},\ }\bibfield  {title} {\bibinfo {title} {Non-hermitian photonics based on parity–time symmetry},\ }\href {https://doi.org/10.1038/s41566-017-0031-1} {\bibfield  {journal} {\bibinfo  {journal} {Nat. Photonics}\ }\textbf {\bibinfo {volume} {11}},\ \bibinfo {pages} {752–762} (\bibinfo {year} {2017})}\BibitemShut {NoStop}%
\bibitem [{\citenamefont {El-Ganainy}\ \emph {et~al.}(2018)\citenamefont {El-Ganainy}, \citenamefont {Makris}, \citenamefont {Khajavikhan}, \citenamefont {Musslimani}, \citenamefont {Rotter},\ and\ \citenamefont {Christodoulides}}]{ElGanainy2018}%
  \BibitemOpen
  \bibfield  {author} {\bibinfo {author} {\bibfnamefont {R.}~\bibnamefont {El-Ganainy}}, \bibinfo {author} {\bibfnamefont {K.~G.}\ \bibnamefont {Makris}}, \bibinfo {author} {\bibfnamefont {M.}~\bibnamefont {Khajavikhan}}, \bibinfo {author} {\bibfnamefont {Z.~H.}\ \bibnamefont {Musslimani}}, \bibinfo {author} {\bibfnamefont {S.}~\bibnamefont {Rotter}},\ and\ \bibinfo {author} {\bibfnamefont {D.~N.}\ \bibnamefont {Christodoulides}},\ }\bibfield  {title} {\bibinfo {title} {Non-hermitian physics and pt symmetry},\ }\href {https://doi.org/10.1038/nphys4323} {\bibfield  {journal} {\bibinfo  {journal} {Nat. Phys.}\ }\textbf {\bibinfo {volume} {14}},\ \bibinfo {pages} {11–19} (\bibinfo {year} {2018})}\BibitemShut {NoStop}%
\bibitem [{\citenamefont {\"{O}zdemir}\ \emph {et~al.}(2019)\citenamefont {\"{O}zdemir}, \citenamefont {Rotter}, \citenamefont {Nori},\ and\ \citenamefont {Yang}}]{zdemir2019}%
  \BibitemOpen
  \bibfield  {author} {\bibinfo {author} {\bibfnamefont {{\c{S}}.~K.}\ \bibnamefont {\"{O}zdemir}}, \bibinfo {author} {\bibfnamefont {S.}~\bibnamefont {Rotter}}, \bibinfo {author} {\bibfnamefont {F.}~\bibnamefont {Nori}},\ and\ \bibinfo {author} {\bibfnamefont {L.}~\bibnamefont {Yang}},\ }\bibfield  {title} {\bibinfo {title} {Parity–time symmetry and exceptional points in photonics},\ }\href {https://doi.org/10.1038/s41563-019-0304-9} {\bibfield  {journal} {\bibinfo  {journal} {Nat. Mater.}\ }\textbf {\bibinfo {volume} {18}},\ \bibinfo {pages} {783–798} (\bibinfo {year} {2019})}\BibitemShut {NoStop}%
\bibitem [{\citenamefont {Miri}\ and\ \citenamefont {Alù}(2019)}]{Miri2019}%
  \BibitemOpen
  \bibfield  {author} {\bibinfo {author} {\bibfnamefont {M.-A.}\ \bibnamefont {Miri}}\ and\ \bibinfo {author} {\bibfnamefont {A.}~\bibnamefont {Alù}},\ }\bibfield  {title} {\bibinfo {title} {Exceptional points in optics and photonics},\ }\href {https://doi.org/10.1126/science.aar7709} {\bibfield  {journal} {\bibinfo  {journal} {Science}\ }\textbf {\bibinfo {volume} {363}},\ \bibinfo {pages} {eaar7709} (\bibinfo {year} {2019})}\BibitemShut {NoStop}%
\bibitem [{\citenamefont {Fan}\ \emph {et~al.}(2003)\citenamefont {Fan}, \citenamefont {Suh},\ and\ \citenamefont {Joannopoulos}}]{Fan2003}%
  \BibitemOpen
  \bibfield  {author} {\bibinfo {author} {\bibfnamefont {S.}~\bibnamefont {Fan}}, \bibinfo {author} {\bibfnamefont {W.}~\bibnamefont {Suh}},\ and\ \bibinfo {author} {\bibfnamefont {J.~D.}\ \bibnamefont {Joannopoulos}},\ }\bibfield  {title} {\bibinfo {title} {Temporal coupled-mode theory for the fano resonance in optical resonators},\ }\href {https://doi.org/10.1364/josaa.20.000569} {\bibfield  {journal} {\bibinfo  {journal} {J. Opt. Soc. Am. A}\ }\textbf {\bibinfo {volume} {20}},\ \bibinfo {pages} {569} (\bibinfo {year} {2003})}\BibitemShut {NoStop}%
\bibitem [{\citenamefont {Haus}\ and\ \citenamefont {Huang}(1991)}]{Haus1991}%
  \BibitemOpen
  \bibfield  {author} {\bibinfo {author} {\bibfnamefont {H.}~\bibnamefont {Haus}}\ and\ \bibinfo {author} {\bibfnamefont {W.}~\bibnamefont {Huang}},\ }\bibfield  {title} {\bibinfo {title} {Coupled-mode theory},\ }\href {https://doi.org/10.1109/5.104225} {\bibfield  {journal} {\bibinfo  {journal} {Proc. IEEE}\ }\textbf {\bibinfo {volume} {79}},\ \bibinfo {pages} {1505–1518} (\bibinfo {year} {1991})}\BibitemShut {NoStop}%
\bibitem [{\citenamefont {Wigner}(1932)}]{Wigner1932Zeitumkehr}%
  \BibitemOpen
  \bibfield  {author} {\bibinfo {author} {\bibfnamefont {E.~P.}\ \bibnamefont {Wigner}},\ }\bibfield  {title} {\bibinfo {title} {Über die operation der zeitumkehr in der quantenmechanik},\ }\href@noop {} {\bibfield  {journal} {\bibinfo  {journal} {Nachr. Akad. Wiss. Göttingen Math.-Phys. Kl.}\ ,\ \bibinfo {pages} {546}} (\bibinfo {year} {1932})},\ \bibinfo {note} {{EUDML} \href{https://eudml.org/doc/59401}{59401}}\BibitemShut {NoStop}%
\bibitem [{\citenamefont {Wigner}(1959)}]{Wigner1959}%
  \BibitemOpen
  \bibfield  {author} {\bibinfo {author} {\bibfnamefont {E.~P.}\ \bibnamefont {Wigner}},\ }\href@noop {} {\emph {\bibinfo {title} {Group Theory and Its Application to the Quantum Mechanics of Atomic Spectra}}},\ \bibinfo {edition} {expanded and improved}\ ed.,\ Pure and Applied Physics, Vol. 5\ (\bibinfo  {publisher} {Academic Press Inc.},\ \bibinfo {address} {New York and London},\ \bibinfo {year} {1959})\ pp.\ \bibinfo {pages} {233--236},\ \bibinfo {note} {translated from the German original “Gruppentheorie und ihre Anwendungen auf die Quantenmechanik der Atomspektren”}\BibitemShut {NoStop}%
\bibitem [{\citenamefont {Wigner}(1993)}]{Wigner1993}%
  \BibitemOpen
  \bibfield  {author} {\bibinfo {author} {\bibfnamefont {E.~P.}\ \bibnamefont {Wigner}},\ }\bibinfo {title} {{\"U}ber die operation der zeitumkehr in der quantenmechanik},\ in\ \href {https://doi.org/10.1007/978-3-662-02781-3_15} {\emph {\bibinfo {booktitle} {The Collected Works of Eugene Paul Wigner: Part A: The Scientific Papers}}},\ \bibinfo {editor} {edited by\ \bibinfo {editor} {\bibfnamefont {A.~S.}\ \bibnamefont {Wightman}}}\ (\bibinfo  {publisher} {Springer Berlin Heidelberg},\ \bibinfo {address} {Berlin, Heidelberg},\ \bibinfo {year} {1993})\ pp.\ \bibinfo {pages} {213--226}\BibitemShut {NoStop}%
\bibitem [{\citenamefont {Radjavi}\ and\ \citenamefont {Williams}(1969)}]{Radjavi1969}%
  \BibitemOpen
  \bibfield  {author} {\bibinfo {author} {\bibfnamefont {H.}~\bibnamefont {Radjavi}}\ and\ \bibinfo {author} {\bibfnamefont {J.~P.}\ \bibnamefont {Williams}},\ }\bibfield  {title} {\bibinfo {title} {Products of self-adjoint operators.},\ }\href {https://doi.org/10.1307/mmj/1029000220} {\bibfield  {journal} {\bibinfo  {journal} {Mich. Math. J.}\ }\textbf {\bibinfo {volume} {16}},\ \bibinfo {pages} {177} (\bibinfo {year} {1969})}\BibitemShut {NoStop}%
\bibitem [{\citenamefont {Heisenberg}(1957)}]{Heisenberg1957}%
  \BibitemOpen
  \bibfield  {author} {\bibinfo {author} {\bibfnamefont {W.}~\bibnamefont {Heisenberg}},\ }\bibfield  {title} {\bibinfo {title} {Lee model and quantisation of non linear field equations},\ }\href {https://doi.org/10.1016/0029-5582(87)90060-5} {\bibfield  {journal} {\bibinfo  {journal} {Nucl. Phys.}\ }\textbf {\bibinfo {volume} {4}},\ \bibinfo {pages} {532} (\bibinfo {year} {1957})}\BibitemShut {NoStop}%
\bibitem [{\citenamefont {Scholtz}\ \emph {et~al.}(1992)\citenamefont {Scholtz}, \citenamefont {Geyer},\ and\ \citenamefont {Hahne}}]{Scholtz1992}%
  \BibitemOpen
  \bibfield  {author} {\bibinfo {author} {\bibfnamefont {F.}~\bibnamefont {Scholtz}}, \bibinfo {author} {\bibfnamefont {H.}~\bibnamefont {Geyer}},\ and\ \bibinfo {author} {\bibfnamefont {F.}~\bibnamefont {Hahne}},\ }\bibfield  {title} {\bibinfo {title} {Quasi-hermitian operators in quantum mechanics and the variational principle},\ }\href {https://doi.org/10.1016/0003-4916(92)90284-s} {\bibfield  {journal} {\bibinfo  {journal} {Ann. Phys.}\ }\textbf {\bibinfo {volume} {213}},\ \bibinfo {pages} {74–101} (\bibinfo {year} {1992})}\BibitemShut {NoStop}%
\bibitem [{\citenamefont {Pauli}(1943)}]{Pauli1943DiracFieldQuantization}%
  \BibitemOpen
  \bibfield  {author} {\bibinfo {author} {\bibfnamefont {W.}~\bibnamefont {Pauli}},\ }\bibfield  {title} {\bibinfo {title} {On dirac's new method of field quantization},\ }\href {https://doi.org/10.1103/RevModPhys.15.175} {\bibfield  {journal} {\bibinfo  {journal} {Rev. Mod. Phys.}\ }\textbf {\bibinfo {volume} {15}},\ \bibinfo {pages} {175} (\bibinfo {year} {1943})}\BibitemShut {NoStop}%
\bibitem [{\citenamefont {Mostafazadeh}(2002)}]{Mostafazadeh2002}%
  \BibitemOpen
  \bibfield  {author} {\bibinfo {author} {\bibfnamefont {A.}~\bibnamefont {Mostafazadeh}},\ }\bibfield  {title} {\bibinfo {title} {Pseudo-hermiticity versus pt symmetry: The necessary condition for the reality of the spectrum of a non-hermitian hamiltonian},\ }\href {https://doi.org/10.1063/1.1418246} {\bibfield  {journal} {\bibinfo  {journal} {J. Math. Phys.}\ }\textbf {\bibinfo {volume} {43}},\ \bibinfo {pages} {205–214} (\bibinfo {year} {2002})},\ \Eprint {https://arxiv.org/abs/math-ph/0107001} {arXiv:math-ph/0107001 [math-ph]} \BibitemShut {NoStop}%
\bibitem [{\citenamefont {Mostafazadeh}(2003)}]{Mostafazadeh2003}%
  \BibitemOpen
  \bibfield  {author} {\bibinfo {author} {\bibfnamefont {A.}~\bibnamefont {Mostafazadeh}},\ }\bibfield  {title} {\bibinfo {title} {Pseudo-hermiticity and generalized pt- and cpt-symmetries},\ }\href {https://doi.org/10.1063/1.1539304} {\bibfield  {journal} {\bibinfo  {journal} {J. Math. Phys.}\ }\textbf {\bibinfo {volume} {44}},\ \bibinfo {pages} {974–989} (\bibinfo {year} {2003})},\ \Eprint {https://arxiv.org/abs/math-ph/0209018} {arXiv:math-ph/0209018 [math-ph]} \BibitemShut {NoStop}%
\bibitem [{\citenamefont {Taussky}\ and\ \citenamefont {Parker}(1960)}]{Taussky1960}%
  \BibitemOpen
  \bibfield  {author} {\bibinfo {author} {\bibfnamefont {O.}~\bibnamefont {Taussky}}\ and\ \bibinfo {author} {\bibfnamefont {W.~V.}\ \bibnamefont {Parker}},\ }\bibfield  {title} {\bibinfo {title} {Problem 4846},\ }\href {https://doi.org/10.2307/2308556} {\bibfield  {journal} {\bibinfo  {journal} {Am. Math. Mon.}\ }\textbf {\bibinfo {volume} {67}},\ \bibinfo {pages} {192} (\bibinfo {year} {1960})},\ \bibinfo {note} {{JSTOR} \href{https://www.jstor.org/stable/2308556}{2308556}}\BibitemShut {NoStop}%
\bibitem [{\citenamefont {Drazin}\ and\ \citenamefont {Haynsworth}(1962)}]{Drazin1962}%
  \BibitemOpen
  \bibfield  {author} {\bibinfo {author} {\bibfnamefont {M.~P.}\ \bibnamefont {Drazin}}\ and\ \bibinfo {author} {\bibfnamefont {E.~V.}\ \bibnamefont {Haynsworth}},\ }\bibfield  {title} {\bibinfo {title} {Criteria for the reality of matrix eigenvalues},\ }\href {https://doi.org/10.1007/bf01195188} {\bibfield  {journal} {\bibinfo  {journal} {Math. Z.}\ }\textbf {\bibinfo {volume} {78}},\ \bibinfo {pages} {449} (\bibinfo {year} {1962})},\ \bibinfo {note} {{EUDML} \href{https://eudml.org/doc/170043}{170043}}\BibitemShut {NoStop}%
\bibitem [{\citenamefont {Dieudonn{\'e}}(1961)}]{dieudonne}%
  \BibitemOpen
  \bibfield  {author} {\bibinfo {author} {\bibfnamefont {J.}~\bibnamefont {Dieudonn{\'e}}},\ }\bibfield  {title} {\bibinfo {title} {Quasi-{Hermitian} operators},\ }\href@noop {} {\bibfield  {journal} {\bibinfo  {journal} {Proc. Internat. Sympos. Linear Spaces (Jerusalem, 1960), Pergamon, Oxford}\ }\textbf {\bibinfo {volume} {115122}} (\bibinfo {year} {1961})}\BibitemShut {NoStop}%
\bibitem [{\citenamefont {Koornwinder}(1997)}]{Koornwinder1997}%
  \BibitemOpen
  \bibfield  {author} {\bibinfo {author} {\bibfnamefont {T.~H.}\ \bibnamefont {Koornwinder}},\ }\bibfield  {title} {\bibinfo {title} {Special functions and $q$-commuting variables},\ }in\ \href {https://doi.org/10.1090/fic/014/07} {\emph {\bibinfo {booktitle} {Special Functions, \(q\)-Series and Related Topics}}},\ \bibinfo {series} {Fields Institute Communications}, Vol.~\bibinfo {volume} {14},\ \bibinfo {editor} {edited by\ \bibinfo {editor} {\bibfnamefont {M.~E.~H.}\ \bibnamefont {Ismail}}, \bibinfo {editor} {\bibfnamefont {D.~R.}\ \bibnamefont {Masson}},\ and\ \bibinfo {editor} {\bibfnamefont {M.}~\bibnamefont {Rahman}}}\ (\bibinfo  {publisher} {American Mathematical Society},\ \bibinfo {year} {1997})\ pp.\ \bibinfo {pages} {131--166},\ \Eprint {https://arxiv.org/abs/q-alg/9608008} {arXiv:q-alg/9608008} \BibitemShut {NoStop}%
\bibitem [{\citenamefont {Witten}(1981)}]{Witten1981}%
  \BibitemOpen
  \bibfield  {author} {\bibinfo {author} {\bibfnamefont {E.}~\bibnamefont {Witten}},\ }\bibfield  {title} {\bibinfo {title} {Dynamical breaking of supersymmetry},\ }\href {https://doi.org/10.1016/0550-3213(81)90006-7} {\bibfield  {journal} {\bibinfo  {journal} {Nucl. Phys. B}\ }\textbf {\bibinfo {volume} {188}},\ \bibinfo {pages} {513–554} (\bibinfo {year} {1981})}\BibitemShut {NoStop}%
\bibitem [{\citenamefont {Witten}(1982)}]{Witten1982}%
  \BibitemOpen
  \bibfield  {author} {\bibinfo {author} {\bibfnamefont {E.}~\bibnamefont {Witten}},\ }\bibfield  {title} {\bibinfo {title} {Constraints on supersymmetry breaking},\ }\href {https://doi.org/10.1016/0550-3213(82)90071-2} {\bibfield  {journal} {\bibinfo  {journal} {Nucl. Phys. B}\ }\textbf {\bibinfo {volume} {202}},\ \bibinfo {pages} {253–316} (\bibinfo {year} {1982})}\BibitemShut {NoStop}%
\bibitem [{\citenamefont {Miri}\ \emph {et~al.}(2013)\citenamefont {Miri}, \citenamefont {Heinrich}, \citenamefont {El-Ganainy},\ and\ \citenamefont {Christodoulides}}]{Miri2013}%
  \BibitemOpen
  \bibfield  {author} {\bibinfo {author} {\bibfnamefont {M.-A.}\ \bibnamefont {Miri}}, \bibinfo {author} {\bibfnamefont {M.}~\bibnamefont {Heinrich}}, \bibinfo {author} {\bibfnamefont {R.}~\bibnamefont {El-Ganainy}},\ and\ \bibinfo {author} {\bibfnamefont {D.~N.}\ \bibnamefont {Christodoulides}},\ }\bibfield  {title} {\bibinfo {title} {Supersymmetric optical structures},\ }\href {https://doi.org/10.1103/physrevlett.110.233902} {\bibfield  {journal} {\bibinfo  {journal} {Phys. Rev. Lett.}\ }\textbf {\bibinfo {volume} {110}},\ \bibinfo {pages} {233902} (\bibinfo {year} {2013})},\ \Eprint {https://arxiv.org/abs/1304.6646} {arXiv:1304.6646 [physics.optics]} \BibitemShut {NoStop}%
\bibitem [{\citenamefont {Heinrich}\ \emph {et~al.}(2014)\citenamefont {Heinrich}, \citenamefont {Miri}, \citenamefont {St\"{u}tzer}, \citenamefont {El-Ganainy}, \citenamefont {Nolte}, \citenamefont {Szameit},\ and\ \citenamefont {Christodoulides}}]{Heinrich2014}%
  \BibitemOpen
  \bibfield  {author} {\bibinfo {author} {\bibfnamefont {M.}~\bibnamefont {Heinrich}}, \bibinfo {author} {\bibfnamefont {M.-A.}\ \bibnamefont {Miri}}, \bibinfo {author} {\bibfnamefont {S.}~\bibnamefont {St\"{u}tzer}}, \bibinfo {author} {\bibfnamefont {R.}~\bibnamefont {El-Ganainy}}, \bibinfo {author} {\bibfnamefont {S.}~\bibnamefont {Nolte}}, \bibinfo {author} {\bibfnamefont {A.}~\bibnamefont {Szameit}},\ and\ \bibinfo {author} {\bibfnamefont {D.~N.}\ \bibnamefont {Christodoulides}},\ }\bibfield  {title} {\bibinfo {title} {Supersymmetric mode converters},\ }\href {https://doi.org/10.1038/ncomms4698} {\bibfield  {journal} {\bibinfo  {journal} {Nat. Commun.}\ }\textbf {\bibinfo {volume} {5}},\ \bibinfo {pages} {3698} (\bibinfo {year} {2014})},\ \Eprint {https://arxiv.org/abs/1401.5734} {arXiv:1401.5734 [physics.optics]} \BibitemShut {NoStop}%
\bibitem [{\citenamefont {Datta}\ \emph {et~al.}(2024)\citenamefont {Datta}, \citenamefont {Alizadeh}, \citenamefont {El-Ganainy},\ and\ \citenamefont {Roychowdhury}}]{Datta2024}%
  \BibitemOpen
  \bibfield  {author} {\bibinfo {author} {\bibfnamefont {S.}~\bibnamefont {Datta}}, \bibinfo {author} {\bibfnamefont {M.}~\bibnamefont {Alizadeh}}, \bibinfo {author} {\bibfnamefont {R.}~\bibnamefont {El-Ganainy}},\ and\ \bibinfo {author} {\bibfnamefont {K.}~\bibnamefont {Roychowdhury}},\ }\bibfield  {title} {\bibinfo {title} {A topological route to engineering robust and bright supersymmetric laser arrays},\ }\href {https://doi.org/10.1038/s42005-024-01905-1} {\bibfield  {journal} {\bibinfo  {journal} {Commun. Phys.}\ }\textbf {\bibinfo {volume} {7}},\ \bibinfo {pages} {403} (\bibinfo {year} {2024})},\ \Eprint {https://arxiv.org/abs/2412.12275} {arXiv:2412.12275 [physics.optics]} \BibitemShut {NoStop}%
\bibitem [{\citenamefont {Arkinstall}\ \emph {et~al.}(2017)\citenamefont {Arkinstall}, \citenamefont {Teimourpour}, \citenamefont {Feng}, \citenamefont {El-Ganainy},\ and\ \citenamefont {Schomerus}}]{Arkinstall2017}%
  \BibitemOpen
  \bibfield  {author} {\bibinfo {author} {\bibfnamefont {J.}~\bibnamefont {Arkinstall}}, \bibinfo {author} {\bibfnamefont {M.~H.}\ \bibnamefont {Teimourpour}}, \bibinfo {author} {\bibfnamefont {L.}~\bibnamefont {Feng}}, \bibinfo {author} {\bibfnamefont {R.}~\bibnamefont {El-Ganainy}},\ and\ \bibinfo {author} {\bibfnamefont {H.}~\bibnamefont {Schomerus}},\ }\bibfield  {title} {\bibinfo {title} {Topological tight-binding models from nontrivial square roots},\ }\href {https://doi.org/10.1103/physrevb.95.165109} {\bibfield  {journal} {\bibinfo  {journal} {Phys. Rev. B}\ }\textbf {\bibinfo {volume} {95}},\ \bibinfo {pages} {165109} (\bibinfo {year} {2017})},\ \Eprint {https://arxiv.org/abs/1702.07648} {arXiv:1702.07648 [cond-mat.mes-hall]} \BibitemShut {NoStop}%
\bibitem [{\citenamefont {Zhang}\ \emph {et~al.}(2019)\citenamefont {Zhang}, \citenamefont {Teimourpour}, \citenamefont {Arkinstall}, \citenamefont {Pan}, \citenamefont {Miao}, \citenamefont {Schomerus}, \citenamefont {El‐Ganainy},\ and\ \citenamefont {Feng}}]{Zhang2019}%
  \BibitemOpen
  \bibfield  {author} {\bibinfo {author} {\bibfnamefont {Z.}~\bibnamefont {Zhang}}, \bibinfo {author} {\bibfnamefont {M.~H.}\ \bibnamefont {Teimourpour}}, \bibinfo {author} {\bibfnamefont {J.}~\bibnamefont {Arkinstall}}, \bibinfo {author} {\bibfnamefont {M.}~\bibnamefont {Pan}}, \bibinfo {author} {\bibfnamefont {P.}~\bibnamefont {Miao}}, \bibinfo {author} {\bibfnamefont {H.}~\bibnamefont {Schomerus}}, \bibinfo {author} {\bibfnamefont {R.}~\bibnamefont {El‐Ganainy}},\ and\ \bibinfo {author} {\bibfnamefont {L.}~\bibnamefont {Feng}},\ }\bibfield  {title} {\bibinfo {title} {Experimental realization of multiple topological edge states in a 1d photonic lattice},\ }\href {https://doi.org/10.1002/lpor.201800202} {\bibfield  {journal} {\bibinfo  {journal} {Laser Photonics Rev.}\ }\textbf {\bibinfo {volume} {13}},\ \bibinfo {pages} {1800202} (\bibinfo {year} {2019})},\ \Eprint {https://arxiv.org/abs/1812.05572} {arXiv:1812.05572 [physics.optics]} \BibitemShut {NoStop}%
\bibitem [{\citenamefont {Kremer}\ \emph {et~al.}(2020)\citenamefont {Kremer}, \citenamefont {Petrides}, \citenamefont {Meyer}, \citenamefont {Heinrich}, \citenamefont {Zilberberg},\ and\ \citenamefont {Szameit}}]{Kremer2020}%
  \BibitemOpen
  \bibfield  {author} {\bibinfo {author} {\bibfnamefont {M.}~\bibnamefont {Kremer}}, \bibinfo {author} {\bibfnamefont {I.}~\bibnamefont {Petrides}}, \bibinfo {author} {\bibfnamefont {E.}~\bibnamefont {Meyer}}, \bibinfo {author} {\bibfnamefont {M.}~\bibnamefont {Heinrich}}, \bibinfo {author} {\bibfnamefont {O.}~\bibnamefont {Zilberberg}},\ and\ \bibinfo {author} {\bibfnamefont {A.}~\bibnamefont {Szameit}},\ }\bibfield  {title} {\bibinfo {title} {A square-root topological insulator with non-quantized indices realized with photonic aharonov-bohm cages},\ }\href {https://doi.org/10.1038/s41467-020-14692-4} {\bibfield  {journal} {\bibinfo  {journal} {Nat. Commun.}\ }\textbf {\bibinfo {volume} {11}},\ \bibinfo {pages} {907} (\bibinfo {year} {2020})},\ \Eprint {https://arxiv.org/abs/2006.10731} {arXiv:2006.10731 [cond-mat.mes-hall]} \BibitemShut {NoStop}%
\bibitem [{\citenamefont {Mostafazadeh}\ and\ \citenamefont {\"Oz\c{c}elik}(2006)}]{Mostafazadeh2006PseudoHermitian}%
  \BibitemOpen
  \bibfield  {author} {\bibinfo {author} {\bibfnamefont {A.}~\bibnamefont {Mostafazadeh}}\ and\ \bibinfo {author} {\bibfnamefont {S.}~\bibnamefont {\"Oz\c{c}elik}},\ }\bibfield  {title} {\bibinfo {title} {Explicit realization of pseudo-hermitian and quasi-hermitian quantum mechanics for two-level systems},\ }\href {https://journals.tubitak.gov.tr/physics/vol30/iss5/8} {\bibfield  {journal} {\bibinfo  {journal} {Turk. J. Phys.}\ }\textbf {\bibinfo {volume} {30}},\ \bibinfo {pages} {437} (\bibinfo {year} {2006})},\ \Eprint {https://arxiv.org/abs/quant-ph/0607120} {arXiv:quant-ph/0607120 [quant-ph]} \BibitemShut {NoStop}%
\bibitem [{\citenamefont {Wang}(2013)}]{Wang2013}%
  \BibitemOpen
  \bibfield  {author} {\bibinfo {author} {\bibfnamefont {Q.-H.}\ \bibnamefont {Wang}},\ }\bibfield  {title} {\bibinfo {title} {2 × 2 pt-symmetric matrices and their applications},\ }\href {https://doi.org/10.1098/rsta.2012.0045} {\bibfield  {journal} {\bibinfo  {journal} {Philos. Trans. R. Soc. A}\ }\textbf {\bibinfo {volume} {371}},\ \bibinfo {pages} {20120045} (\bibinfo {year} {2013})}\BibitemShut {NoStop}%
\bibitem [{\citenamefont {Taussky}(1988)}]{Taussky1988}%
  \BibitemOpen
  \bibfield  {author} {\bibinfo {author} {\bibfnamefont {O.}~\bibnamefont {Taussky}},\ }\bibfield  {title} {\bibinfo {title} {How i became a torchbearer for matrix theory},\ }\href {https://doi.org/10.1080/00029890.1988.11972092} {\bibfield  {journal} {\bibinfo  {journal} {Am. Math. Mon.}\ }\textbf {\bibinfo {volume} {95}},\ \bibinfo {pages} {801–812} (\bibinfo {year} {1988})},\ \bibinfo {note} {{JSTOR} \href{https://www.jstor.org/stable/2322895}{2322895}}\BibitemShut {NoStop}%
\bibitem [{\citenamefont {Coulson}\ and\ \citenamefont {Rushbrooke}(1940)}]{Coulson1940}%
  \BibitemOpen
  \bibfield  {author} {\bibinfo {author} {\bibfnamefont {C.~A.}\ \bibnamefont {Coulson}}\ and\ \bibinfo {author} {\bibfnamefont {G.~S.}\ \bibnamefont {Rushbrooke}},\ }\bibfield  {title} {\bibinfo {title} {Note on the method of molecular orbitals},\ }\href {https://doi.org/10.1017/s0305004100017163} {\bibfield  {journal} {\bibinfo  {journal} {Math. Proc. Camb. Philos. Soc.}\ }\textbf {\bibinfo {volume} {36}},\ \bibinfo {pages} {193–200} (\bibinfo {year} {1940})}\BibitemShut {NoStop}%
\bibitem [{\citenamefont {Ruedenberg}\ and\ \citenamefont {Scherr}(1953)}]{Ruedenberg1953}%
  \BibitemOpen
  \bibfield  {author} {\bibinfo {author} {\bibfnamefont {K.}~\bibnamefont {Ruedenberg}}\ and\ \bibinfo {author} {\bibfnamefont {C.~W.}\ \bibnamefont {Scherr}},\ }\bibfield  {title} {\bibinfo {title} {Free-electron network model for conjugated systems. i. theory},\ }\href {https://doi.org/10.1063/1.1699299} {\bibfield  {journal} {\bibinfo  {journal} {J. Chem. Phys.}\ }\textbf {\bibinfo {volume} {21}},\ \bibinfo {pages} {1565–1581} (\bibinfo {year} {1953})}\BibitemShut {NoStop}%
\bibitem [{\citenamefont {Kawabata}\ \emph {et~al.}(2019)\citenamefont {Kawabata}, \citenamefont {Shiozaki}, \citenamefont {Ueda},\ and\ \citenamefont {Sato}}]{Kawabata2019}%
  \BibitemOpen
  \bibfield  {author} {\bibinfo {author} {\bibfnamefont {K.}~\bibnamefont {Kawabata}}, \bibinfo {author} {\bibfnamefont {K.}~\bibnamefont {Shiozaki}}, \bibinfo {author} {\bibfnamefont {M.}~\bibnamefont {Ueda}},\ and\ \bibinfo {author} {\bibfnamefont {M.}~\bibnamefont {Sato}},\ }\bibfield  {title} {\bibinfo {title} {Symmetry and topology in non-hermitian physics},\ }\href {https://doi.org/10.1103/physrevx.9.041015} {\bibfield  {journal} {\bibinfo  {journal} {Phys. Rev. X}\ }\textbf {\bibinfo {volume} {9}},\ \bibinfo {pages} {041015} (\bibinfo {year} {2019})},\ \Eprint {https://arxiv.org/abs/1812.09133} {arXiv:1812.09133 [cond-mat.mes-hall]} \BibitemShut {NoStop}%
\bibitem [{\citenamefont {Znojil}(2002)}]{Znojil2002}%
  \BibitemOpen
  \bibfield  {author} {\bibinfo {author} {\bibfnamefont {M.}~\bibnamefont {Znojil}},\ }\bibfield  {title} {\bibinfo {title} {Non-hermitian supersymmetry and singular, $\mathcal{PT}$-symmetrized oscillators},\ }\href {https://doi.org/10.1088/0305-4470/35/9/320} {\bibfield  {journal} {\bibinfo  {journal} {J. Phys. A}\ }\textbf {\bibinfo {volume} {35}},\ \bibinfo {pages} {2341–2352} (\bibinfo {year} {2002})}\BibitemShut {NoStop}%
\bibitem [{\citenamefont {Znojil}(2004)}]{Znojil2004}%
  \BibitemOpen
  \bibfield  {author} {\bibinfo {author} {\bibfnamefont {M.}~\bibnamefont {Znojil}},\ }\bibfield  {title} {\bibinfo {title} {Script $\mathcal{PT}$-symmetric regularizations in supersymmetric quantum mechanics},\ }\href {https://doi.org/10.1088/0305-4470/37/43/013} {\bibfield  {journal} {\bibinfo  {journal} {J. Phys. A}\ }\textbf {\bibinfo {volume} {37}},\ \bibinfo {pages} {10209–10222} (\bibinfo {year} {2004})}\BibitemShut {NoStop}%
\bibitem [{\citenamefont {Bagarello}(2020)}]{Bagarello2020}%
  \BibitemOpen
  \bibfield  {author} {\bibinfo {author} {\bibfnamefont {F.}~\bibnamefont {Bagarello}},\ }\bibfield  {title} {\bibinfo {title} {Susy for non-hermitian hamiltonians, with a view to coherent states},\ }\href {https://doi.org/10.1007/s11040-020-09353-3} {\bibfield  {journal} {\bibinfo  {journal} {Math. Phys. Anal. Geom.}\ }\textbf {\bibinfo {volume} {23}},\ \bibinfo {pages} {28} (\bibinfo {year} {2020})},\ \Eprint {https://arxiv.org/abs/2007.01677} {arXiv:2007.01677 [math-ph]} \BibitemShut {NoStop}%
\bibitem [{\citenamefont {Sylvester}(1883)}]{Sylvester1883}%
  \BibitemOpen
  \bibfield  {author} {\bibinfo {author} {\bibfnamefont {J.}~\bibnamefont {Sylvester}},\ }\bibfield  {title} {\bibinfo {title} {Xxxix. on the equation to the secular inequalities in the planetary theory},\ }\href {https://doi.org/10.1080/14786448308627430} {\bibfield  {journal} {\bibinfo  {journal} {Philos. Mag.}\ }\textbf {\bibinfo {volume} {16}},\ \bibinfo {pages} {267–269} (\bibinfo {year} {1883})}\BibitemShut {NoStop}%
\bibitem [{\citenamefont {Martínez-Martínez}\ \emph {et~al.}(2024)\citenamefont {Martínez-Martínez}, \citenamefont {Moreno-Rodriguez}, \citenamefont {Méndez-Bermúdez},\ and\ \citenamefont {Benisty}}]{MartnezMartnez2024}%
  \BibitemOpen
  \bibfield  {author} {\bibinfo {author} {\bibfnamefont {C.~T.}\ \bibnamefont {Martínez-Martínez}}, \bibinfo {author} {\bibfnamefont {L.~A.}\ \bibnamefont {Moreno-Rodriguez}}, \bibinfo {author} {\bibfnamefont {J.~A.}\ \bibnamefont {Méndez-Bermúdez}},\ and\ \bibinfo {author} {\bibfnamefont {H.}~\bibnamefont {Benisty}},\ }\bibfield  {title} {\bibinfo {title} {Stability mapping of bipartite tight-binding graphs with losses and gain: $\mathcal{PT}$-symmetry and beyond},\ }\href {https://doi.org/10.1063/5.0199771} {\bibfield  {journal} {\bibinfo  {journal} {Chaos}\ }\textbf {\bibinfo {volume} {34}},\ \bibinfo {pages} {053116} (\bibinfo {year} {2024})},\ \Eprint {https://arxiv.org/abs/2212.13642} {arXiv:2212.13642 [cond-mat.dis-nn]} \BibitemShut {NoStop}%
\bibitem [{\citenamefont {Hasan}\ and\ \citenamefont {Kane}(2010)}]{Hasan2010}%
  \BibitemOpen
  \bibfield  {author} {\bibinfo {author} {\bibfnamefont {M.~Z.}\ \bibnamefont {Hasan}}\ and\ \bibinfo {author} {\bibfnamefont {C.~L.}\ \bibnamefont {Kane}},\ }\bibfield  {title} {\bibinfo {title} {Colloquium: Topological insulators},\ }\href {https://doi.org/10.1103/revmodphys.82.3045} {\bibfield  {journal} {\bibinfo  {journal} {Rev. Mod. Phys.}\ }\textbf {\bibinfo {volume} {82}},\ \bibinfo {pages} {3045–3067} (\bibinfo {year} {2010})},\ \Eprint {https://arxiv.org/abs/1002.3895} {arXiv:1002.3895 [cond-mat.mes-hall]} \BibitemShut {NoStop}%
\bibitem [{\citenamefont {Sutherland}(1986)}]{Sutherland1986}%
  \BibitemOpen
  \bibfield  {author} {\bibinfo {author} {\bibfnamefont {B.}~\bibnamefont {Sutherland}},\ }\bibfield  {title} {\bibinfo {title} {Localization of electronic wave functions due to local topology},\ }\href {https://doi.org/10.1103/physrevb.34.5208} {\bibfield  {journal} {\bibinfo  {journal} {Phys. Rev. B}\ }\textbf {\bibinfo {volume} {34}},\ \bibinfo {pages} {5208–5211} (\bibinfo {year} {1986})}\BibitemShut {NoStop}%
\bibitem [{\citenamefont {Barnett}(2023)}]{Barnett2023PhD}%
  \BibitemOpen
  \bibfield  {author} {\bibinfo {author} {\bibfnamefont {J.~L.}\ \bibnamefont {Barnett}},\ }\emph {\bibinfo {title} {Locality and Exceptional Points in Pseudo-Hermitian Physics}},\ \href {https://doi.org/10.48550/ARXIV.2306.04044} {\bibinfo {type} {Ph.d. thesis}},\ \bibinfo  {school} {University of Waterloo}, \bibinfo {address} {Waterloo, Ontario, Canada} (\bibinfo {year} {2023}),\ \Eprint {https://arxiv.org/abs/2306.04044} {arXiv:2306.04044 [quant-ph]} \BibitemShut {NoStop}%
\bibitem [{\citenamefont {Schomerus}(2013)}]{Schomerus2013}%
  \BibitemOpen
  \bibfield  {author} {\bibinfo {author} {\bibfnamefont {H.}~\bibnamefont {Schomerus}},\ }\bibfield  {title} {\bibinfo {title} {Topologically protected midgap states in complex photonic lattices},\ }\href {https://doi.org/10.1364/ol.38.001912} {\bibfield  {journal} {\bibinfo  {journal} {Opt. Lett.}\ }\textbf {\bibinfo {volume} {38}},\ \bibinfo {pages} {1912} (\bibinfo {year} {2013})},\ \Eprint {https://arxiv.org/abs/1301.0777} {arXiv:1301.0777 [physics.optics]} \BibitemShut {NoStop}%
\bibitem [{\citenamefont {Ge}(2017)}]{Ge2017}%
  \BibitemOpen
  \bibfield  {author} {\bibinfo {author} {\bibfnamefont {L.}~\bibnamefont {Ge}},\ }\bibfield  {title} {\bibinfo {title} {Symmetry-protected zero-mode laser with a tunable spatial profile},\ }\href {https://doi.org/10.1103/physreva.95.023812} {\bibfield  {journal} {\bibinfo  {journal} {Phys. Rev. A}\ }\textbf {\bibinfo {volume} {95}},\ \bibinfo {pages} {023812} (\bibinfo {year} {2017})},\ \Eprint {https://arxiv.org/abs/1610.09717} {arXiv:1610.09717 [physics.optics]} \BibitemShut {NoStop}%
\bibitem [{\citenamefont {Hill}(1969)}]{Hill1969}%
  \BibitemOpen
  \bibfield  {author} {\bibinfo {author} {\bibfnamefont {R.~D.}\ \bibnamefont {Hill}},\ }\bibfield  {title} {\bibinfo {title} {Inertia theory for simultaneously triangulable complex matrices},\ }\href {https://doi.org/10.1016/0024-3795(69)90022-6} {\bibfield  {journal} {\bibinfo  {journal} {Linear Algebra Its Appl.}\ }\textbf {\bibinfo {volume} {2}},\ \bibinfo {pages} {131–142} (\bibinfo {year} {1969})}\BibitemShut {NoStop}%
\bibitem [{\citenamefont {Lieu}(2018)}]{Lieu2018}%
  \BibitemOpen
  \bibfield  {author} {\bibinfo {author} {\bibfnamefont {S.}~\bibnamefont {Lieu}},\ }\bibfield  {title} {\bibinfo {title} {Topological phases in the non-hermitian su-schrieffer-heeger model},\ }\href {https://doi.org/10.1103/physrevb.97.045106} {\bibfield  {journal} {\bibinfo  {journal} {Phys. Rev. B}\ }\textbf {\bibinfo {volume} {97}},\ \bibinfo {pages} {045106} (\bibinfo {year} {2018})},\ \Eprint {https://arxiv.org/abs/1709.03788} {arXiv:1709.03788 [cond-mat.mes-hall]} \BibitemShut {NoStop}%
\bibitem [{\citenamefont {Su}\ \emph {et~al.}(1980)\citenamefont {Su}, \citenamefont {Schrieffer},\ and\ \citenamefont {Heeger}}]{Su1980}%
  \BibitemOpen
  \bibfield  {author} {\bibinfo {author} {\bibfnamefont {W.~P.}\ \bibnamefont {Su}}, \bibinfo {author} {\bibfnamefont {J.~R.}\ \bibnamefont {Schrieffer}},\ and\ \bibinfo {author} {\bibfnamefont {A.~J.}\ \bibnamefont {Heeger}},\ }\bibfield  {title} {\bibinfo {title} {Soliton excitations in polyacetylene},\ }\href {https://doi.org/10.1103/physrevb.22.2099} {\bibfield  {journal} {\bibinfo  {journal} {Phys. Rev. B}\ }\textbf {\bibinfo {volume} {22}},\ \bibinfo {pages} {2099–2111} (\bibinfo {year} {1980})}\BibitemShut {NoStop}%
\bibitem [{\citenamefont {Hatano}\ and\ \citenamefont {Nelson}(1996)}]{HatanoNelson1996Localization}%
  \BibitemOpen
  \bibfield  {author} {\bibinfo {author} {\bibfnamefont {N.}~\bibnamefont {Hatano}}\ and\ \bibinfo {author} {\bibfnamefont {D.~R.}\ \bibnamefont {Nelson}},\ }\bibfield  {title} {\bibinfo {title} {Localization transitions in non‑hermitian quantum mechanics},\ }\href {https://doi.org/10.1103/PhysRevLett.77.570} {\bibfield  {journal} {\bibinfo  {journal} {Phys. Rev. Lett.}\ }\textbf {\bibinfo {volume} {77}},\ \bibinfo {pages} {570} (\bibinfo {year} {1996})},\ \Eprint {https://arxiv.org/abs/cond-mat/9603165} {arXiv:cond-mat/9603165 [cond-mat]} \BibitemShut {NoStop}%
\bibitem [{\citenamefont {R\'ozsa}(1969)}]{Rozsa1969PeriodicContinuants}%
  \BibitemOpen
  \bibfield  {author} {\bibinfo {author} {\bibfnamefont {P.}~\bibnamefont {R\'ozsa}},\ }\bibfield  {title} {\bibinfo {title} {On periodic continuants},\ }\href {https://doi.org/10.1016/0024-3795(69)90030-5} {\bibfield  {journal} {\bibinfo  {journal} {Linear Algebra Its Appl.}\ }\textbf {\bibinfo {volume} {2}},\ \bibinfo {pages} {267} (\bibinfo {year} {1969})}\BibitemShut {NoStop}%
\bibitem [{\citenamefont {Williams}(1969)}]{williams1969operators}%
  \BibitemOpen
  \bibfield  {author} {\bibinfo {author} {\bibfnamefont {J.~P.}\ \bibnamefont {Williams}},\ }\bibfield  {title} {\bibinfo {title} {Operators similar to their adjoints},\ }\href {https://doi.org/10.1090/s0002-9939-1969-0233230-5} {\bibfield  {journal} {\bibinfo  {journal} {Proc. Am. Math. Soc.}\ }\textbf {\bibinfo {volume} {20}},\ \bibinfo {pages} {121} (\bibinfo {year} {1969})},\ \bibinfo {note} {{JSTOR} \href{https://www.jstor.org/stable/2035972}{2035972}}\BibitemShut {NoStop}%
\bibitem [{\citenamefont {Manin}(1987)}]{Manin1987}%
  \BibitemOpen
  \bibfield  {author} {\bibinfo {author} {\bibfnamefont {Y.~I.}\ \bibnamefont {Manin}},\ }\bibfield  {title} {\bibinfo {title} {Some remarks on koszul algebras and quantum groups},\ }\href {https://doi.org/10.5802/aif.1117} {\bibfield  {journal} {\bibinfo  {journal} {Ann. Inst. Fourier}\ }\textbf {\bibinfo {volume} {37}},\ \bibinfo {pages} {191–205} (\bibinfo {year} {1987})}\BibitemShut {NoStop}%
\bibitem [{\citenamefont {Manin}(2018)}]{Manin2018}%
  \BibitemOpen
  \bibfield  {author} {\bibinfo {author} {\bibfnamefont {Y.~I.}\ \bibnamefont {Manin}},\ }\href {https://doi.org/10.1007/978-3-319-97987-8} {\emph {\bibinfo {title} {Quantum Groups and Noncommutative Geometry}}}\ (\bibinfo  {publisher} {Springer International Publishing},\ \bibinfo {year} {2018})\BibitemShut {NoStop}%
\bibitem [{\citenamefont {Gohsrich}\ \emph {et~al.}(2024)\citenamefont {Gohsrich}, \citenamefont {Fauman},\ and\ \citenamefont {Kunst}}]{gohsrich2024exceptional}%
  \BibitemOpen
  \bibfield  {author} {\bibinfo {author} {\bibfnamefont {J.~T.}\ \bibnamefont {Gohsrich}}, \bibinfo {author} {\bibfnamefont {J.}~\bibnamefont {Fauman}},\ and\ \bibinfo {author} {\bibfnamefont {F.~K.}\ \bibnamefont {Kunst}},\ }\href@noop {} {\bibinfo {title} {Exceptional points of any order in a generalized hatano-nelson model}} (\bibinfo {year} {2024}),\ \Eprint {https://arxiv.org/abs/2403.12018} {arXiv:2403.12018 [cond-mat.mes-hall]} \BibitemShut {NoStop}%
\bibitem [{\citenamefont {Sylvester}(1882)}]{Sylvester1882Nonions}%
  \BibitemOpen
  \bibfield  {author} {\bibinfo {author} {\bibfnamefont {J.~J.}\ \bibnamefont {Sylvester}},\ }\bibfield  {title} {\bibinfo {title} {A word on nonions},\ }\href@noop {} {\bibfield  {journal} {\bibinfo  {journal} {Johns Hopkins Univ. Circ.}\ }\textbf {\bibinfo {volume} {1}},\ \bibinfo {pages} {241} (\bibinfo {year} {1882})}\BibitemShut {NoStop}%
\bibitem [{\citenamefont {McIntosh}(1962)}]{MCINTOSH1962169}%
  \BibitemOpen
  \bibfield  {author} {\bibinfo {author} {\bibfnamefont {H.~V.}\ \bibnamefont {McIntosh}},\ }\bibfield  {title} {\bibinfo {title} {On matrices which anticommute with a hamiltonian},\ }\href {https://doi.org/10.1016/0022-2852(62)90019-x} {\bibfield  {journal} {\bibinfo  {journal} {J. Mol. Spectrosc.}\ }\textbf {\bibinfo {volume} {8}},\ \bibinfo {pages} {169} (\bibinfo {year} {1962})}\BibitemShut {NoStop}%
\bibitem [{\citenamefont {Sopena}(2001)}]{Sopena2001}%
  \BibitemOpen
  \bibfield  {author} {\bibinfo {author} {\bibfnamefont {E.}~\bibnamefont {Sopena}},\ }\bibfield  {title} {\bibinfo {title} {Oriented graph coloring},\ }\href {https://doi.org/10.1016/s0012-365x(00)00216-8} {\bibfield  {journal} {\bibinfo  {journal} {Discrete Math.}\ }\textbf {\bibinfo {volume} {229}},\ \bibinfo {pages} {359–369} (\bibinfo {year} {2001})}\BibitemShut {NoStop}%
\bibitem [{\citenamefont {Kac}\ and\ \citenamefont {Cheung}(2002)}]{kac2002quantum}%
  \BibitemOpen
  \bibfield  {author} {\bibinfo {author} {\bibfnamefont {V.~G.}\ \bibnamefont {Kac}}\ and\ \bibinfo {author} {\bibfnamefont {P.}~\bibnamefont {Cheung}},\ }\href {https://doi.org/10.1007/978-1-4613-0071-7} {\emph {\bibinfo {title} {Quantum calculus}}},\ Vol.\ \bibinfo {volume} {113}\ (\bibinfo  {publisher} {Springer},\ \bibinfo {year} {2002})\BibitemShut {NoStop}%
\bibitem [{\citenamefont {Schutzenberger}(1953)}]{schutzenberger1953interpretation}%
  \BibitemOpen
  \bibfield  {author} {\bibinfo {author} {\bibfnamefont {M.-P.}\ \bibnamefont {Schutzenberger}},\ }\bibfield  {title} {\bibinfo {title} {Une interpretation de certaines solutions de lequation fonctionnelle-f (x+ y)= f (x) f (y)},\ }\href@noop {} {\bibfield  {journal} {\bibinfo  {journal} {C. R. Acad. Sci.}\ }\textbf {\bibinfo {volume} {236}},\ \bibinfo {pages} {352} (\bibinfo {year} {1953})}\BibitemShut {NoStop}%
\bibitem [{\citenamefont {Manin}(1992)}]{manin1991notes}%
  \BibitemOpen
  \bibfield  {author} {\bibinfo {author} {\bibfnamefont {Y.~I.}\ \bibnamefont {Manin}},\ }\bibfield  {title} {\bibinfo {title} {Notes on quantum groups and quantum de {Rham} complexes},\ }\href@noop {} {\bibfield  {journal} {\bibinfo  {journal} {Teoret. Mat. Fiz.}\ }\textbf {\bibinfo {volume} {92}},\ \bibinfo {pages} {425} (\bibinfo {year} {1992})}\BibitemShut {NoStop}%
\bibitem [{\citenamefont {Gauß}(1808)}]{Gauß1808}%
  \BibitemOpen
  \bibfield  {author} {\bibinfo {author} {\bibfnamefont {C.~F.}\ \bibnamefont {Gauß}},\ }\href@noop {} {\emph {\bibinfo {title} {Summatio quarumdam serierum singularium}}}\ (\bibinfo  {publisher} {Dieterich},\ \bibinfo {year} {1808})\ \bibinfo {note} {{EUDML} \href{https://eudml.org/doc/203313}{203313}}\BibitemShut {NoStop}%
\bibitem [{\citenamefont {Araujo-Regado}\ \emph {et~al.}(2023)\citenamefont {Araujo-Regado}, \citenamefont {Khan},\ and\ \citenamefont {Wall}}]{AraujoRegado2023}%
  \BibitemOpen
  \bibfield  {author} {\bibinfo {author} {\bibfnamefont {G.}~\bibnamefont {Araujo-Regado}}, \bibinfo {author} {\bibfnamefont {R.}~\bibnamefont {Khan}},\ and\ \bibinfo {author} {\bibfnamefont {A.~C.}\ \bibnamefont {Wall}},\ }\bibfield  {title} {\bibinfo {title} {Cauchy slice holography: a new ads/cft dictionary},\ }\href {https://doi.org/10.1007/jhep03(2023)026} {\bibfield  {journal} {\bibinfo  {journal} {J. High Energy Phys.}\ }\textbf {\bibinfo {volume} {2023}}\bibfield  {number} {\bibinfo  {number} { (3)}},\ }\Eprint {https://arxiv.org/abs/2204.00591} {arXiv:2204.00591 [quant-ph]} \BibitemShut {NoStop}%
\bibitem [{\citenamefont {Pauli}(1927)}]{Pauli1927}%
  \BibitemOpen
  \bibfield  {author} {\bibinfo {author} {\bibfnamefont {W.}~\bibnamefont {Pauli}},\ }\bibfield  {title} {\bibinfo {title} {Zur quantenmechanik des magnetischen elektrons},\ }\href {https://doi.org/10.1007/bf01397326} {\bibfield  {journal} {\bibinfo  {journal} {Z. Phys.}\ }\textbf {\bibinfo {volume} {43}},\ \bibinfo {pages} {601–623} (\bibinfo {year} {1927})}\BibitemShut {NoStop}%
\bibitem [{\citenamefont {Bender}\ and\ \citenamefont {Mannheim}(2010)}]{Bender2010}%
  \BibitemOpen
  \bibfield  {author} {\bibinfo {author} {\bibfnamefont {C.~M.}\ \bibnamefont {Bender}}\ and\ \bibinfo {author} {\bibfnamefont {P.~D.}\ \bibnamefont {Mannheim}},\ }\bibfield  {title} {\bibinfo {title} {$\mathcal{PT}$-symmetry and necessary and sufficient conditions for the reality of energy eigenvalues},\ }\href {https://doi.org/10.1016/j.physleta.2010.02.032} {\bibfield  {journal} {\bibinfo  {journal} {Phys. Lett. A}\ }\textbf {\bibinfo {volume} {374}},\ \bibinfo {pages} {1616–1620} (\bibinfo {year} {2010})},\ \Eprint {https://arxiv.org/abs/0902.1365} {arXiv:0902.1365 [hep-th]} \BibitemShut {NoStop}%
\bibitem [{\citenamefont {Pop}\ and\ \citenamefont {Furdui}(2016)}]{PopFurdui_SquareMatrices2}%
  \BibitemOpen
  \bibfield  {author} {\bibinfo {author} {\bibfnamefont {V.}~\bibnamefont {Pop}}\ and\ \bibinfo {author} {\bibfnamefont {O.}~\bibnamefont {Furdui}},\ }\href {https://doi.org/10.1007/978-3-319-54939-2} {\emph {\bibinfo {title} {Square Matrices of Order 2: Theory, Applications and Problems}}}\ (\bibinfo  {publisher} {Springer},\ \bibinfo {address} {Cham, Switzerland},\ \bibinfo {year} {2016})\BibitemShut {NoStop}%
\bibitem [{\citenamefont {Doran}(2018)}]{Doran2018}%
  \BibitemOpen
  \bibfield  {author} {\bibinfo {author} {\bibfnamefont {R.}~\bibnamefont {Doran}},\ }\href {https://doi.org/10.1201/9781315139043} {\emph {\bibinfo {title} {Characterizations of $\text{C}^*$-Algebras}}}\ (\bibinfo  {publisher} {CRC Press},\ \bibinfo {year} {2018})\BibitemShut {NoStop}%
\bibitem [{\citenamefont {Bru}\ and\ \citenamefont {Alberto~de Siqueira~Pedra}(2023)}]{Bru2023}%
  \BibitemOpen
  \bibfield  {author} {\bibinfo {author} {\bibfnamefont {J.-B.}\ \bibnamefont {Bru}}\ and\ \bibinfo {author} {\bibfnamefont {W.}~\bibnamefont {Alberto~de Siqueira~Pedra}},\ }\href {https://doi.org/10.1007/978-3-031-28949-1} {\emph {\bibinfo {title} {$C^*$-Algebras and Mathematical Foundations of Quantum Statistical Mechanics: An Introduction}}}\ (\bibinfo  {publisher} {Springer International Publishing},\ \bibinfo {year} {2023})\BibitemShut {NoStop}%
\bibitem [{\citenamefont {Karuvade}\ \emph {et~al.}(2022)\citenamefont {Karuvade}, \citenamefont {Alase},\ and\ \citenamefont {Sanders}}]{Karuvade2022}%
  \BibitemOpen
  \bibfield  {author} {\bibinfo {author} {\bibfnamefont {S.}~\bibnamefont {Karuvade}}, \bibinfo {author} {\bibfnamefont {A.}~\bibnamefont {Alase}},\ and\ \bibinfo {author} {\bibfnamefont {B.~C.}\ \bibnamefont {Sanders}},\ }\bibfield  {title} {\bibinfo {title} {Observing a changing hilbert-space inner product},\ }\href {https://doi.org/10.1103/physrevresearch.4.013016} {\bibfield  {journal} {\bibinfo  {journal} {Phys. Rev. Res.}\ }\textbf {\bibinfo {volume} {4}},\ \bibinfo {pages} {013016} (\bibinfo {year} {2022})},\ \Eprint {https://arxiv.org/abs/2101.00015} {arXiv:2101.00015 [quant-ph]} \BibitemShut {NoStop}%
\bibitem [{\citenamefont {Bian}\ \emph {et~al.}(2020)\citenamefont {Bian}, \citenamefont {Xiao}, \citenamefont {Wang}, \citenamefont {Zhan}, \citenamefont {Onanga}, \citenamefont {Ruzicka}, \citenamefont {Yi}, \citenamefont {Joglekar},\ and\ \citenamefont {Xue}}]{bian2019time}%
  \BibitemOpen
  \bibfield  {author} {\bibinfo {author} {\bibfnamefont {Z.}~\bibnamefont {Bian}}, \bibinfo {author} {\bibfnamefont {L.}~\bibnamefont {Xiao}}, \bibinfo {author} {\bibfnamefont {K.}~\bibnamefont {Wang}}, \bibinfo {author} {\bibfnamefont {X.}~\bibnamefont {Zhan}}, \bibinfo {author} {\bibfnamefont {F.~A.}\ \bibnamefont {Onanga}}, \bibinfo {author} {\bibfnamefont {F.}~\bibnamefont {Ruzicka}}, \bibinfo {author} {\bibfnamefont {W.}~\bibnamefont {Yi}}, \bibinfo {author} {\bibfnamefont {Y.~N.}\ \bibnamefont {Joglekar}},\ and\ \bibinfo {author} {\bibfnamefont {P.}~\bibnamefont {Xue}},\ }\bibfield  {title} {\bibinfo {title} {Conserved quantities in parity-time symmetric systems},\ }\href {https://doi.org/10.1103/physrevresearch.2.022039} {\bibfield  {journal} {\bibinfo  {journal} {Phys. Rev. Res.}\ }\textbf {\bibinfo {volume} {2}},\ \bibinfo {pages} {022039} (\bibinfo {year} {2020})},\ \Eprint {https://arxiv.org/abs/1903.09806} {arXiv:1903.09806 [quant-ph]} \BibitemShut {NoStop}%
\bibitem [{\citenamefont {De~Nittis}\ and\ \citenamefont {Polo~Ojito}(2023)}]{DeNittis2023}%
  \BibitemOpen
  \bibfield  {author} {\bibinfo {author} {\bibfnamefont {G.}~\bibnamefont {De~Nittis}}\ and\ \bibinfo {author} {\bibfnamefont {D.}~\bibnamefont {Polo~Ojito}},\ }\bibfield  {title} {\bibinfo {title} {About the notion of eigenstates for $c^*$-algebras and some application in quantum mechanics},\ }\href {https://doi.org/10.1063/5.0153219} {\bibfield  {journal} {\bibinfo  {journal} {J. Math. Phys.}\ }\textbf {\bibinfo {volume} {64}},\ \bibinfo {pages} {083506} (\bibinfo {year} {2023})},\ \Eprint {https://arxiv.org/abs/2304.02685} {arXiv:2304.02685 [math.CO]} \BibitemShut {NoStop}%
\end{thebibliography}%

\end{document}